\providecommand{\tabularnewline}{\\}
\theoremstyle{definition}
\newtheorem{defn}{\protect\definitionname}
\theoremstyle{plain}
\newtheorem{prop}{\protect\propositionname}
\theoremstyle{plain}
\newtheorem{cor}{\protect\corollaryname}
\theoremstyle{plain}
\newtheorem{fact}{\protect\factname}
\theoremstyle{plain}
\newtheorem{lem}{\protect\lemmaname}
\providecommand{\corollaryname}{Corollary}
\providecommand{\definitionname}{Definition}
\providecommand{\factname}{Fact}
\providecommand{\lemmaname}{Lemma}
\providecommand{\propositionname}{Proposition}
\begin{document}
\title{Dynamic Information Design with \\Diminishing Sensitivity Over News\thanks{We thank Krishna Dasaratha, David Dillenberger, Ben Enke, Drew Fudenberg,
Simone Galperti, Jerry Green, Faruk Gul, David Hagmann, Marina Halac,
Johannes H\"{o}rner, David Laibson, Shengwu Li, Jonathan Libgober,
Elliot Lipnowski, Erik Madsen, Pietro Ortoleva, Matthew Rabin, Gautam
Rao, Collin Raymond, Joel Sobel, Charlie Sprenger, Tomasz Strzalecki,
the MIT information design reading group, and our seminar participants
for insightful comments. Kevin He thanks the California Institute
of Technology for hospitality when some of the work on this paper
was completed. All remaining errors are our own.}}
\author{Jetlir Duraj\thanks{University of Pittsburgh. Email: \protect\href{mailto:jetlirduraj@gmail.com}{jetlirduraj@gmail.com}}
\and Kevin He\thanks{University of Pennsylvania. Email: \protect\href{mailto:hesichao@gmail.com}{hesichao@gmail.com}}}
\date{{\normalsize{}}%
\begin{tabular}{rl}
First version: & July 1, 2019\tabularnewline
This version: & January 13, 2023\tabularnewline
\end{tabular}\vspace{-3ex}}
\maketitle
\begin{abstract}
A Bayesian agent experiences gain-loss utility each period over changes
in belief about future consumption (\textquotedblleft news utility\textquotedblright ),
with diminishing sensitivity over the magnitude of news. Diminishing
sensitivity induces a preference over news skewness: gradual bad news,
one-shot good news is worse than one-shot resolution, which is in
turn worse than gradual good news, one-shot bad news. So, the agent's
preference between gradual information and one-shot resolution can
depend on his consumption ranking of different states. In a dynamic
cheap-talk framework where a benevolent sender communicates the state
over multiple periods, the babbling equilibrium is essentially unique
without loss aversion. More loss-averse agents may enjoy higher news
utility in equilibrium, contrary to the commitment case. We characterize
the family of gradual good news equilibria that exist with high enough
loss aversion, and find the sender conveys progressively larger pieces
of good news. We discuss applications to media competition and game
shows.

\bigskip{}

\noindent \textbf{Keywords}: diminishing sensitivity, news utility,
dynamic information design, cheap talk, preference over skewness of
information

{\normalsize{}\thispagestyle{empty} \setcounter{page}{0}}{\normalsize\par}

\newpage{}
\end{abstract}
\global\long\def\equivalent{\Longleftrightarrow}%
 
\global\long\def\imply{\Longrightarrow}%
 
\global\long\def\composedwith{\cdot}%
 
\global\long\def\ra{\rightarrow}%
 
\global\long\def\indiff{\sim}%
 
\global\long\def\better{\succeq}%
 
\global\long\def\sbetter{\succ}%
 
\global\long\def\worse{\preceq}%
 
\global\long\def\sworse{\prec}%

\section{\label{sec:Introduction}Introduction}

People are sometimes willing to pay a cost to change how they receive
news over time, even when the information does not help them make
better decisions. Consider the following scenario:

Ann interviews for her dream job and is told that she will receive
the decision by email next week. Ann knows that if the firm decides
to reject her, she will receive the rejection email next Friday. But
if the firm decides to hire her, she could hear back on any of the
weekdays \textemdash{} in other words, no news is bad news. To avoid
experiencing multiple instances of disappointment over the week in
case she does not hear back for several days, Ann sets up an email
filter to automatically redirect any emails from the firm into a holding
tank, then releases all messages from the holding tank into her inbox
at 5PM next Friday.

In this scenario, Ann may be willing to exert costly effort to modify
her informational environment because she experiences \emph{diminishingly
sensitive} psychological reactions to good and bad news. She is elated
by good news and disappointed by bad news in every period, and multiple
congruent pieces of news carry a greater total emotional impact if
they are experienced separately in different periods than if the aggregated
lump-sum news arrives in a single period. This kind of psychological
consideration also influences how people convey news to others. When
CEOs announce earnings forecasts to shareholders and when organization
leaders update their teams about recent developments, they are surely
mindful of their information's emotional impact (in addition to its
possible instrumental value). Finally, the psychological effects of
news also play a prominent role in designing entertainment content
like game shows, where the audience experiences positive and negative
reactions over time to news and developments that have no bearing
on their personal decision-making.

In this paper, we study the implications of diminishingly sensitive
reactions to news for informational preference and dynamic communication.
A person's future consumption depends on an unknown state of the world.
In each period, he observes some information about the state and experiences
gain-loss utility over the \emph{change} in his belief about said
future consumption (``\emph{news utility}''). How does this person
prefer to learn about the state over time? If there is another agent
who knows the state and who wants to maximize the first person's expected
welfare, how will this informed agent communicate her information?

Of course, we are not the first to model news utility (see \citet{kHoszegi2009reference})
or to study psychological considerations in dynamic games (see the
survey \citet{battigalli2022belief}, for example). Our main innovation
is the focus on the implications of diminishing sensitivity \textemdash{}
a classical but surprisingly under-studied assumption. Diminishing
sensitivity in reference dependence traces back to \citet{kahneman1979prospect}'s
original formulation of prospect theory. Based on Weber's law and
experimental findings about human perception, these authors envisioned
a gain-loss utility based on deviations from a reference point, where
larger deviations carry smaller marginal effects. But almost all subsequent
work on reference-dependent preferences use two-part linear gain-loss
utility functions, so their results are driven by loss aversion but
not diminishing sensitivity.\footnote{\citet{kHoszegi2009reference}'s model of news utility allows for
diminishing sensitivity and they argue it is a realistic feature.
But their results either work with a special case without diminishing
sensitivity, or are in a setting where news utility with and without
diminishing sensitivity are behaviorally equivalent.} Four decades since \citet{kahneman1979prospect}'s publication, \citet{odonoghue2018reference}'s
review of the ensuing literature summarizes the situation:
\begin{quotation}
``Most applications of reference-dependent preferences focus entirely
on loss aversion, and ignore the possibility of diminishing sensitivity
{[}...{]} The literature still needs to develop a better sense of
when diminishing sensitivity is important.''
\end{quotation}
We show that diminishing sensitivity leads to novel and testable predictions
for information design. First, when the agent commits to an information
structure ex-ante, diminishing sensitivity generates a preference
over the direction of news \emph{skewness}. Any information structure
where good news arrives all at once but bad news arrives gradually
in small pieces \textemdash{} such as waiting for the job offer in
Ann's scenario \textemdash{} is strictly worse than resolving all
uncertainty in one period (``one-shot resolution''). On the other
hand, any information structure with the opposite skewness \textemdash{}
good news arrives gradually but bad news all at once \textemdash{}
is strictly better than one-shot resolution, provided loss aversion
is weak enough. We relate this result to recent experiments about
preference over the skewness of information in Section \ref{subsec:Experiments-on-Information}. 

As \citet{kHoszegi2009reference} point out, the two-part linear news-utility
model (without diminishing sensitivity) predicts that people prefer
one-shot resolution over any other dynamic information structure.
At the same time, some other theories (e.g., \citet*{ely2015suspense}'s
suspense and surprise utility) make the ``opposite'' prediction
that one-shot resolution is the worst possible information structure.
By contrast, the skewness preference induced by news utility with
diminishing sensitivity implies the same person can make different
choices between gradual information and one-shot resolution in different
situations \textemdash{} in particular, it depends on his consumption
ranking over the states.

For instance, in a world where two possible states ($A$ and $B$)
are associated with two different consumption prizes, imagine  state
$A$ realizes if and only if a sequence of intermediate events all
take place successfully over time. We show that when the agent prefers
the prize in state $A$, he will choose to observe the intermediate
events resolve in real-time. But when he prefers the prize in state
$B$, he will choose to only learn the final state. At the population
level, this result shows that an underlying diversity in consumption
preferences within a society can create a diversity in informational
preferences, and suggests a mechanism for media competition. The result
also rationalizes a ``sudden death'' format often found in game
shows, where the contestant must overcome every challenge in a sequence
to win the grand prize (as opposed to the grand prize being contingent
on beating at least one of several challenges.)

Our second main result is that when an informed benevolent sender
communicates the state to the receiver through cheap talk, the receiver's
diminishing sensitivity leads to credibility problems for the sender.
We show that if the receiver has diminishing sensitivity and low enough
loss aversion, the lack-of-commitment problem is so severe that every
equilibrium is payoff-equivalent to the babbling equilibrium. The
reason is that the sender strictly prefers to lie and say the state
is good even when it is bad. This temptation is driven by the receiver's
diminishing sensitivity: even though the sender is far-sighted and
knows false hope creates additional disappointment when the state
is revealed, diminishing sensitivity limits the incremental disutility
of this extra future disappointment. Diminishing sensitivity thus
drives a wedge between the commitment solution and the equilibrium
outcome, whereas the two coincide without it. We also show that high
enough loss aversion can restore the equilibrium credibility of good-news
messages by increasing the future disappointment cost of false hope
in the bad state. As a consequence, receivers with higher loss aversion
may enjoy higher equilibrium payoffs.

With enough loss aversion, there exist non-babbling equilibria featuring
gradual good news. We characterize the entire family of such equilibria
and study how quickly the receiver learns the state. For a class of
news-utility functions that include a tractable quadratic specification,
the sender always conveys progressively larger pieces of good news
over time, so the receiver's equilibrium belief grows at an increasing
rate in the good state. The idea is that in equilibrium, the sender
must be made indifferent between giving false hope and telling the
truth in the bad state, and diminishing sensitivity implies that sustaining
said indifference requires a greater amount of false hope when the
receiver's current belief is more optimistic. This conclusion also
puts a uniform bound on the number of periods of informative communication
across all time horizons and all equilibria.

The rest of the paper is organized as follows. Section \ref{sec:Model}
defines the timing of events and introduces a model of news utility
with diminishing sensitivity. Section \ref{sec:gradual_vs_one_shot}
studies how diminishing sensitivity leads to a preference over information
structures with different skewness, then applies this result to show
how an agent's choice between gradual information and one-shot resolution
depends on his consumption ranking of the states. Section \ref{sec:The-Credibility-Problem}
considers an environment where an informed benevolent sender communicates
the state to a receiver with news utility, and focuses on the credibility
problems in the resulting cheap-talk game. Section \ref{sec:Some-Related-Models}
discusses related literature and contrasts our results with the predictions
of other models of preference over non-instrumental information. Section
\ref{sec:Concluding-Discussion} concludes. 

\section{\label{sec:Model}Model}

\subsection{Timing of Events}

We consider a discrete-time model with periods $0,1,2,...,$$T$,
where $T\ge2$. There is a binary state space $\Theta=\{A,B\}.$ In
the final period $T,$ the agent receives a state-dependent consumption
prize $c_{\theta}$ and derives consumption utility $v(c_{\theta})\in\mathbb{R}$.
There is no consumption in other periods, and we assume that $v(c_{A})\ne v(c_{B}).$
We may normalize $v$ without loss so that the agent gets consumption
utility 1 in one state and 0 in the other.

The agent starts with a prior probability $0<\pi_{0}<1$ of the state
being $A$. In every period $t=1,...,T,$ the agent observes some
information and updates his belief about $\{\theta=A\}$ to the Bayesian
posterior $0\le\pi_{t}\le1.$ The information is non-instrumental
in that no actions taken in these interim periods affect the state
or the consumption utility in period $T$. In period $T,$ he exogenously
and perfectly learns the true state $\theta$ at the moment of consumption,
so we always have $\pi_{T}=1$ if $\theta=A$ and $\pi_{T}=0$ if
$\theta=B$.

\subsection{News Utility}

Although the agent only consumes in the final period, he experiences
news utility over consumption in every period. He has a gain-loss
utility function, $\mu:[-1,1]\to\mathbb{R}$, that maps changes in
expected final-period consumption utility into a felicity level. Let
$\rho_{t}=\sum_{\theta\in\Theta}\pi_{t}(\theta)v(c_{\theta})$ denote
this expectation based on the agent's belief in period $t$, and note
$\rho_{t}\in[0,1]$ based on our normalization of $v.$ At the end
of period $1\le t\le T,$ the agent experiences news utility $\mu(\rho_{t}-\rho_{t-1})$
\textemdash{} that is, he derives joy or pain based on the recent
belief update from $\pi_{t-1}$ to $\pi_{t}.$ Utility flow is undiscounted
and the agent has the same $\mu$ in all periods,\footnote{Our preference satisfies \citet{segal1990two}'s time neutrality axiom.
We abstract away from preferences for early or late resolution of
uncertainty.} so his total payoff is $\sum_{t=1}^{T}\mu(\rho_{t}-\rho_{t-1})+v(c).$

Throughout we assume $\mu$ is continuous, strictly increasing, twice
differentiable except possibly at 0, and $\mu(0)=0.$ We maintain
further assumptions on $\mu$ to reflect diminishing sensitivity and
loss aversion.
\begin{defn}
\label{def:DM_loss_averse}Say $\mu$ satisfies\emph{ diminishing
sensitivity} if $\mu^{''}(x)<0$ and $\mu^{''}(-x)>0$ for all $x>0.$
Say $\mu$ satisfies \emph{(weak) loss aversion} if $-\mu(-x)\ge\mu(x)$
for all $x>0.$ There is \emph{strict loss aversion }if $-\mu(-x)>\mu(x)$
for all $x>0.$
\end{defn}
For instance, the gain-loss function $\mu$ in \citet{tversky1992advances}
where $\mu(x)=x^{\alpha}$ for $x\ge0,$ $\mu(x)=-\lambda|x|^{\beta}$
for $x<0$ with $0<\alpha,\beta<1$ and $\lambda>1$ satisfies both
diminishing sensitivity and strict loss aversion.

This model of diminishing sensitivity over the magnitude of news shares
the same psychological motivation as \citet{kahneman1979prospect},
who base their theory of human responses to monetary gains and losses
on Weber's law and on psychology experiments about how people perceive
changes in physical attributes like temperature or brightness. This
framework of deriving utility from changes in beliefs has been previously
discussed in \citet{kHoszegi2009reference}, but they mostly focus
on another model that makes percentile-by-percentile comparisons between
old and new beliefs and without diminishing sensitivity. The model
we use allows us to characterize the implications of diminishing sensitivity
in the simplest setup with two states. 

\subsubsection{Quadratic News Utility}

We discuss another tractable functional form of $\mu$ that is rich
enough to exhibit both diminishing sensitivity and loss aversion.
The quadratic news-utility function $\mu:[-1,1]\to\mathbb{R}$ is
given by

\[
\mu(x)=\begin{cases}
\alpha_{p}x-\beta_{p}x^{2} & x\ge0\\
\alpha_{n}x+\beta_{n}x^{2} & x<0
\end{cases}
\]
with $\alpha_{p},\beta_{p},\alpha_{n},\beta_{n}>0.$ So we have 
\[
\mu^{'}(x)=\begin{cases}
\alpha_{p}-2\beta_{p}x & x>0\\
\alpha_{n}+2\beta_{n}x & x<0
\end{cases},\qquad\mu^{''}(x)=\begin{cases}
-2\beta_{p} & x>0\\
2\beta_{n} & x<0
\end{cases}.
\]
The parameters $\alpha_{p},\alpha_{n}$ control the extent of loss
aversion near 0, while $\beta_{p},\beta_{n}$ determine the amount
of curvature \textemdash{} i.e., the second derivative of $\mu$.
The maintained general assumptions on $\mu$ imply the following parametric
restrictions.
\begin{enumerate}
\item \emph{Monotonicity}: $\alpha_{p}>2\beta_{p}$ and $\alpha_{n}>2\beta_{n}$.
These inequalities hold if and only if $\mu$ is strictly increasing.
\item \emph{Loss aversion}: $\alpha_{n}-\alpha_{p}\ge(\beta_{n}-\beta_{p})z$
for all $z\in[0,1]$. This condition is equivalent to loss aversion
from Definition \ref{def:DM_loss_averse} for this class of news-utility
functions.
\end{enumerate}
A family of quadratic news-utility functions that satisfy these two
restrictions can be constructed by choosing any $\alpha>2\beta>0$
and $\lambda\ge1$, then set $\alpha_{p}=\alpha$, $\alpha_{n}=\lambda\alpha$,
$\beta_{p}=\beta,$ $\beta_{n}=\lambda\beta$. Figure \ref{fig:Examples-of-quadratic}
plots some of these news-utility functions for different values of
$\alpha,\beta,$ and $\lambda$.

\begin{figure}
\begin{centering}
\includegraphics[scale=0.4]{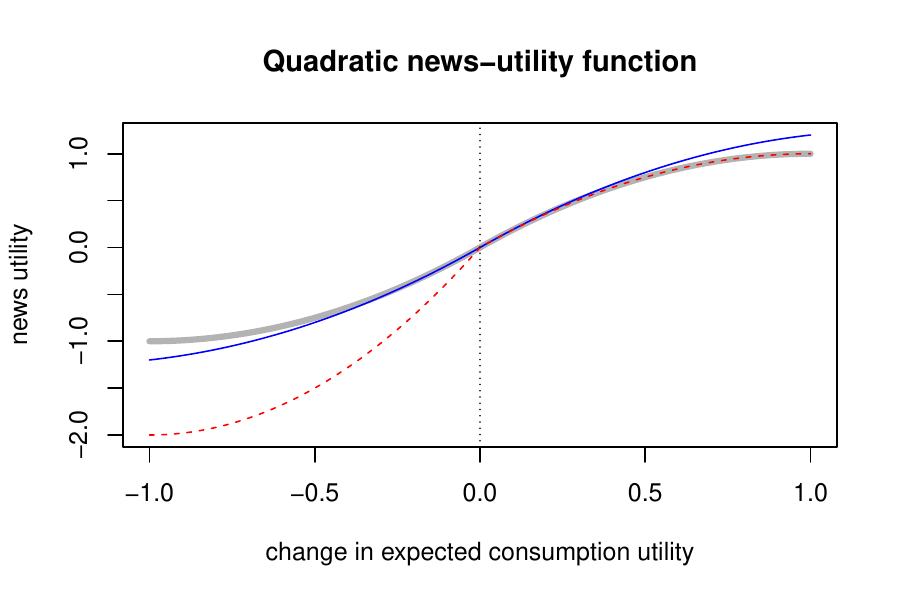}
\par\end{centering}
\caption{\label{fig:Examples-of-quadratic}Examples of quadratic news-utility
functions in the family $\alpha_{p}=\alpha$, $\alpha_{n}=\lambda\alpha$,
$\beta_{p}=\beta,$ $\beta_{n}=\lambda\beta$. Grey curve: $\alpha=2,$
$\beta=1,$ $\lambda=1$. Red curve: $\alpha=2$, $\beta=1$, $\lambda=2$.
Blue curve: $\alpha=2,$ $\beta=0.8$, $\lambda=1$.}
\end{figure}

\section{\label{sec:gradual_vs_one_shot}Diminishing Sensitivity and Preference
over News Skewness}

In this section, we show that news utility with diminishing sensitivity
makes novel predictions about preference over the skewness of information.
When there is no loss aversion, one-shot resolution of uncertainty
is neither the agent's most preferred way to get information nor the
least preferred. Instead, the agent strictly prefers one-shot resolution
over an information structure that delivers piecemeal bad news over
time, and strictly prefers an information structure with the opposite
skewness over one-shot resolution. By continuity, the same conclusions
hold when loss aversion is present but sufficiently weak.
\begin{defn}
\label{def:gradual_one_shot} An information structure features \emph{gradual
good news, one-shot bad news} if
\begin{itemize}
\item $\mathbb{P}[\rho_{t}\ge\rho_{t-1}\text{ for all }1\le t\le T\mid\rho_{T}=1]=1$
and
\item $\mathbb{P}[\rho_{t}<\rho_{t-1}\text{ for no more than one }1\le t\le T\mid\rho_{T}=0]=1$.
\end{itemize}
An information structure features \emph{gradual bad news, one-shot
good news} if
\begin{itemize}
\item $\mathbb{P}[\rho_{t}\le\rho_{t-1}\text{ for all }1\le t\le T\mid\rho_{T}=0]=1$
and
\item $\mathbb{P}[\rho_{t}>\rho_{t-1}\text{ for no more than one }1\le t\le T\mid\rho_{T}=1]=1$.
\end{itemize}
\end{defn}
The event $\rho_{T}=1$ corresponds to the ``good'' state being
realized and $\rho_{T}=0$ corresponds to the ``bad'' state being
realized, from the perspective of the agent's consumption utility.
In the gradual good news, one-shot bad news information structures,
the agent gets good news over time and gradually increases his expectation
of future consumption. When the state is bad, the agent gets all the
negative information at once \textemdash{} in the first period when
his expectation $\rho_{t}$ strictly decreases, he fully learns that
the state is bad. Conversely, ``gradual bad news, one-shot good news''
refers to the opposite kind of information structure.

An information structure features \emph{one-shot resolution} if $\mathbb{P}[\rho_{t}\ne\rho_{t-1}\text{ for at most one }1\le t\le T]=1.$
That is, almost surely the agent's belief only changes in one period
(including the final period when true state is perfectly revealed).
Note that one-shot resolution falls into both classes from Definition
\ref{def:gradual_one_shot}. We say that an information structure
features \emph{strictly gradual good news} if 
\[
\mathbb{P}[\rho_{t}>\rho_{t-1}\text{ and }\rho_{t^{'}}>\rho_{t^{'}-1}\text{ for two distinct }1\le t,t^{'}\le T\mid\rho_{T}=1]>0.
\]
That is, there is positive probability that the agent's expectation
strictly increases at least twice in periods $1$ through $T$. Similarly
define \emph{strictly gradual bad news}.

We now prove that whenever $\mu$ satisfies diminishing sensitivity
and (weak) loss aversion, information structures featuring strictly
gradual bad news, one-shot good news are \emph{strictly} worse than
one-shot resolution. The intuition is that information structures
in this class deliver small pieces of bad news but large clumps of
good news, which is the exact opposite of what the agent wants when
he experiences diminishing sensitivity to news.
\begin{prop}
\label{prop:opposite} Suppose $\mu$ satisfies diminishing sensitivity
and weak loss aversion. Any information structure featuring strictly
gradual bad news, one-shot good news provides strictly lower utility
than one-shot resolution in expectation, and almost surely weakly
lower utility ex-post.
\end{prop}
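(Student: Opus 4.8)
The plan is to work entirely in the mean-based model, where the realized total news utility along a belief path equals $\sum_{t=1}^{T}\mu(v_t-v_{t-1})$ (in the paper's notation $v_t=\mathbb{E}_{\theta\sim\pi_t}[v(c_\theta)]$), while one-shot resolution delivers exactly $\mu(v(c_\theta)-v_0)$ in state $\theta$ — every other period contributes $\mu(0)=0$, and the timing of the single jump is immaterial. So it suffices to show that along almost every realized path of the given information structure, $\sum_{t=1}^{T}\mu(v_t-v_{t-1})\le\mu(v(c_\theta)-v_0)$, with strict inequality on a positive-probability event; the cross-state Bayesian/martingale constraints then play no role in the comparison. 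The two elementary facts I would isolate first are: (i) diminishing sensitivity makes $\mu$ convex on $[-1,0]$ with $\mu(0)=0$, hence \emph{superadditive on losses} — $\sum_i\mu(a_i)\le\mu(\sum_i a_i)$ whenever $a_i\le0$ and $\sum_i a_i\ge-1$, and \emph{strictly} so once at least two of the $a_i$ are strictly negative (strict convexity); (ii) diminishing sensitivity makes $\mu$ concave on $[0,1]$ with $\mu(0)=0$, hence \emph{subadditive on gains} — $\mu(a+b)\le\mu(a)+\mu(b)$ for $a,b\ge0$ with $a+b\le1$.

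For bad-state paths ($\theta\in\Theta_B$), the defining property gives $v_0\ge v_1\ge\cdots\ge v_T=v(c_\theta)$, so all increments are nonpositive and sum to $v(c_\theta)-v_0$; fact (i) yields $\sum_t\mu(v_t-v_{t-1})\le\mu(v(c_\theta)-v_0)$ immediately. For good-state paths ($\theta\in\Theta_G$), ``one-shot good news'' permits at most one strictly positive increment, say $\Delta\ge0$, while the remaining increments are $-\delta_1,\dots,-\delta_m\le0$; writing $D:=\sum_j\delta_j\ge0$, the path ending at $v(c_\theta)\ge v_0$ forces $\Delta-D=v(c_\theta)-v_0\ge0$. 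Then $\sum_t\mu(v_t-v_{t-1})=\mu(\Delta)+\sum_j\mu(-\delta_j)\le\mu(\Delta)+\mu(-D)$ by (i), and by (ii) together with weak loss aversion, $\mu(\Delta)=\mu\big((\Delta-D)+D\big)\le\mu(\Delta-D)+\mu(D)\le\mu(\Delta-D)-\mu(-D)$; combining the two chains gives $\sum_t\mu(v_t-v_{t-1})\le\mu(\Delta-D)=\mu(v(c_\theta)-v_0)$. This establishes the almost-sure ex-post weak dominance by one-shot resolution.

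For the strict ranking in expectation, I would invoke ``strictly gradual bad news,'' which guarantees a positive-probability event on which, in a bad state, $v_t<v_{t-1}$ for at least two distinct periods; on that event the strict-convexity clause of (i) makes $\sum_t\mu(v_t-v_{t-1})<\mu(v(c_\theta)-v_0)$ strict. Since $\Theta_B\neq\emptyset$ and $\pi_0$ has full support (otherwise $v_0$ would equal every $v(c_\theta)$, contradicting distinct consumptions with $K\ge2$), this event has positive unconditional probability, so — together with the a.s.\ weak dominance — the expected total news utility is strictly below that of one-shot resolution. The symmetric case ($\mu(-x)=-\mu(x)$ with diminishing sensitivity but no loss aversion) is subsumed, since symmetry makes weak loss aversion hold with equality and nothing in the argument changes.

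The step I expect to be the main obstacle is the good-state bound: it is the only place where all three hypotheses — concavity on gains, convexity on losses, and weak loss aversion — must be assembled, and one must be careful that the single permitted up-jump may be interleaved with (possibly several) down-jumps whose total never exceeds it, so that the order-free term-by-term estimate through super/subadditivity is legitimate and the leftover $-\mu(-D)$ exactly cancels. The bad-state case and the strictness bookkeeping are comparatively routine once the two monotone-increment facts (i)–(ii) are in hand.
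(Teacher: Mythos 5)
Your proof is correct and follows essentially the same route as the paper's: a pathwise comparison using superadditivity in losses for bad states, a decomposition of the single good-state gain into ``net gain plus recovered loss'' that is then killed by weak loss aversion, and strictness from the positive-probability two-decrease event in some state of $\Theta_B$. The only (cosmetic) difference is that your good-state bound is order-free, so it also covers paths where down-moves follow the single up-move, whereas the paper's write-up treats the canonical shape in which all down-moves precede one jump landing exactly at $v(c_\theta)$.
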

Proposition \ref{prop:opposite} identifies a class of information
structures that are worse than one-shot resolution for news utility
with diminishing sensitivity, distinguishing it from other models
of information preference where one-shot resolution is the worst possible
information structure. Utility models that make this other prediction
include suspense and surprise \citep*{ely2015suspense} and news utility
with a two-part linear, gain-loving (instead of loss-averse) value
function \citep*{chapman2018loss,goetteheterogeneity}.

Next, we show that if the agent has diminishing sensitivity but not
loss aversion, then information structures with strictly gradual good
news, one-shot bad news are \emph{strictly} better than one-shot resolution.
\begin{prop}
\label{prop:gradual_good_preferred}Suppose $\mu$ satisfies diminishing
sensitivity and it is symmetric around 0 with $-\mu(-x)=\mu(x)$ for
all $x\ge0$ (that is, it does not exhibit loss aversion). Any information
structure featuring strictly gradual good news, one-shot bad news
provides strictly higher utility than one-shot resolution in expectation,
and almost surely weakly higher utility ex-post.
\end{prop}
In \citet{kHoszegi2009reference}'s model of news utility without
diminishing sensitivity, one-shot resolution is optimal among all
information structures.\footnote{\citet{kHoszegi2009reference} showed this for their percentile-based
model of news utility with binary states, while \citet{dillenberger2018additive}
proved the same also holds for arbitrarily many states.} By contrast, Proposition \ref{prop:gradual_good_preferred} can be
combined with continuity to show that for news-utility functions with
diminishing sensitivity and a small enough amount of loss aversion,
there are information structures that are strictly better than one-shot
resolution. To make this precise, consider the parametric class of
\emph{$\lambda$-scaled news-utility functions}. We fix some $\tilde{\mu}_{pos}:[0,1]\to\mathbb{R}_{+}$,
strictly increasing and strictly concave with $\tilde{\mu}_{pos}(0)=0$,
and consider the family of news-utility functions given by $\mu_{\lambda}(x)=\tilde{\mu}_{pos}(x)$,
$\mu_{\lambda}(-x)=-\lambda\tilde{\mu}_{pos}(x)$ for $x>0$ as we
vary the loss aversion parameter $\lambda\ge1.$
\begin{cor}
\label{cor:gradual_good_loss_averse}Consider a class of $\lambda$-scaled
news-utility functions $(\mu_{\lambda})_{\lambda\ge1}$ and any information
structure featuring strictly gradual good news, one-shot bad news.
There exists some $\bar{\lambda}>1$ so that for any $1\le\lambda\le\bar{\lambda}$,
this information structure gives strictly higher utility than one-shot
resolution in expectation.
\end{cor}
In summary, provided loss aversion is low enough, diminishing sensitivity
induces the following preference ranking: gradual good news, one-shot
bad news is better than one-shot resolution, which is in turn better
than gradual bad news, one-shot good news. Section \ref{subsec:Experiments-on-Information}
discusses related experimental literature about preference over news
skewness, and Appendix \ref{sec:Optimal-Information-Structure} contains
additional results about preference over information structures.

\subsection{Consumption Preference and Information Preference}

An an application, we consider an environment where a sequence of
signal realizations gradually determine the binary state. We show
that agents with opposite consumption preferences over the two states
can exhibit opposite preferences between observing the signals as
they arrive or only learning the final state, because the same gradual
information translates into two different kinds of skewness for these
agents.

In each period $t=1,2,...,T$, a binary signal $X_{t}$ realizes,
where $\mathbb{P}[X_{t}=1]=q_{t}$ with $0<q_{t}<1$. Each $X_{t}$
is independent of the other ones. The signals determine the state.
If $X_{t}=1$ for all $t$, then the state is $A$. Otherwise, when
$X_{t}=0$ for at least one $t$, the state is $B$.\footnote{Equivalently, we can think of state $A$ having probability $\Pi_{t=1}^{T}q_{t}$
and state $B$ having the complementary probability. Conditional on
$\theta=A$, we always have $X_{1}=....=X_{T}=1.$ Conditional on
$\theta=B$, for a sequence of signal realizations $(x_{1},x_{2},...,x_{T})\in\{0,1\}^{T}$,
we have $\mathbb{P}[(X_{1},...,X_{T})=(x_{1},...,x_{T})\mid\theta=B]=\frac{\Pi_{t=1}^{T}q_{t}^{x_{t}}\cdot(1-q_{t})^{1-x_{t}}}{1-\Pi_{t=1}^{T}q_{t}}$
if at least one $x_{i}$ is 0, otherwise $\mathbb{P}[(X_{1},...,X_{T})=(x_{1},...,x_{T})\mid\theta=B]=0.$
This is an equivalent description of the joint distribution between
the state and the signals $(X_{t})_{t=1}^{T},$ but it is more natural
to think of the signals determining the state over time in the applications
we discuss below.} At time 0, the agent chooses between observing the realizations of
the signals $(X_{t})_{t=1}^{T}$ in real time (\emph{gradual information}),
or only learning the state of the world at the end of period $T$
(\emph{one-shot resolution}).

For a concrete example, imagine a televised debate between two political
candidates $A$ and $B$ where $A$ loses as soon as she makes a ``gaffe''
during the debate.\footnote{\citet{augenblick2018belief} use a similar example of political gaffes
to illustrate Bayesian belief movements.} If $A$ does not make any gaffes, then $A$ wins. In this example,
$\{X_{t}=1\}$ corresponds to the event that candidate $A$ does not
make a gaffe during the $t$-th minute of the debate. States $A$
and $B$ correspond to candidates $A$ and $B$ winning the debate.
An individual chooses between watching the debate live (i.e., observing
the stochastic process $(X_{t})$ in real time) or only reading the
outcome of the debate the following morning (i.e., one-shot resolution
 about the state).

The individual could be someone who benefits from candidate $A$ winning
the debate (that is, $v(c_{A})=1,$ $v(c_{B})=0$), or someone who
benefits from candidate $B$ winning the debate (that is, $v(c_{A})=0,$
$v(c_{B})=1$). For the first type of agent, the debate provides gradual
good news, one-shot bad news. For the second type of agent, the debate
provides gradual bad news, one-shot good news. It follows from Proposition
\ref{prop:opposite} and Corollary \ref{cor:gradual_good_loss_averse}
that these two types can make different choices about whether to watch
the debate.
\begin{cor}
\label{cor:diversity} Consider a class of $\lambda$-scaled news-utility
functions $(\mu_{\lambda})_{\lambda\ge1}$. For any $\lambda\ge1$,
the agent chooses one-shot resolution over gradual information when
$v(c_{A})=0,$ $v(c_{B})=1$. There exists some $\bar{\lambda}>1$
so that for any $1\le\lambda\le\bar{\lambda}$, the agent chooses
gradual information over one-shot resolution when $v(c_{A})=1,$ $v(c_{B})=0$.
\end{cor}
At the population level, Corollary \ref{cor:diversity} shows that
society can exhibit an\emph{ endogenous diversity} of information
preferences, driven by an underlying diversity of consumption preferences.
Individuals with the same news-utility function $\mu$ can nevertheless
choose to learn about the state of the world in two different ways,
if they have opposite rankings of the states in terms of their consumption
levels. So heterogeneous consumption preferences generate heterogeneous
information preferences.\footnote{\citet{kim2021temporal} document this heterogeneity in information
choice, finding that people pay less attention to political news when
their own political party is performing poorly. This shows that individuals
choose their information exposure as in our model, and also provides
evidence that people avoid getting piecemeal bad news and savor piecemeal
good news.}

This observation also suggests a possible mechanism for media competition:
if the realization of some state $A$ depends on a series of smaller
events, then some news sources may cover these small events in detail
as they happen, while other sources may choose to only report the
final outcome. If there is a heterogeneity of tastes over states in
the society, then viewers will sort between these two kinds of news
sources based on how they rank states $A$ and $B$ in terms of consumption.

At the individual level, Corollary \ref{cor:diversity} shows that
the same person may choose gradual information in one situation but
one-shot resolution in another, even if his news-utility function
remains stable. For example, if political candidate $X$ wins any
debate when and only when she does not make a gaffe, an agent may
choose to watch a debate between candidates $X$ and $Y$ but refuse
to watch a debate between candidates $X$ and $Z$, because he prefers
$X$ over $Y$ but $Z$ over $X.$

By contrast, many related theories about behavioral information preference
tend to predict that the agent either always prefers one-shot resolution
in all situations, or always prefers every other information structure
to one-shot resolution in all situations.

\begin{prop}
\label{prop:no_diversity}The following models predict that the agent
will not change his choice between gradual information and one-shot
resolution when the sign of $v(c_{A})-v(c_{B})$ changes.
\begin{enumerate}
\item News utility with a two-part linear $\mu$, where $\mu(x)=x$ for
$x\ge0$ and $\mu(x)=\lambda x$ for $x<0$, with any $\lambda\ge0$.
\item Anticipatory utility where the agent gets $u(\rho_{t})$ in period
$t$, with $u$ an increasing, weakly concave function.
\item \citet*{ely2015suspense}'s ``suspense and surprise'' utility.
\end{enumerate}
\end{prop}

\subsection{An Application to Game Shows}

A final consequence of Corollary \ref{cor:diversity} concerns the
design of game shows. Consider a game show featuring a single contestant
who will win either \$100,000 or nothing depending on her performance
across five rounds.\footnote{This can be thought of as a stylized payout structure for game shows
like \emph{American Ninja Warrior }and \emph{Who Wants to Be a Millionaire}.} The audience, empathizing with the contestant, derives news utility
$\mu(\pi_{t}-\pi_{t-1})$ at the end of round $t,$ where $\pi_{t}$
is the contestant's probability of winning the prize based on the
first $t$ rounds. One possible format (``sudden death'') features
five easy rounds each with $w=0.5^{1/5}\approx87\%$ winning probability,
where the contestant wins \$100,000 if she wins all five rounds. Another
possible format (``rep\^{e}chage'') involves five hard rounds each
with $1-w$ winning probability, but the contestant wins \$100,000
as soon as she wins any round. Both formats lead to the same distribution
over final outcomes and generate the same amount of suspense and surprise
utilities \`{a} la \citet*{ely2015suspense}. Corollary \ref{cor:diversity}
shows the first format induces more news utility than one-shot resolution
(which could correspond to not watching the game show and simply looking
up the contestant's outcome later) for audience members who are not
too loss averse, while the second format is worse than one-shot resolution
for all audience members. Consistent with this prediction, the vast
majority of game shows resemble the first format more than the second
format.

\section{\label{sec:The-Credibility-Problem}Diminishing Sensitivity and the
Credibility Problem}

So far, we have assumed the agent commits to an information structure
ex-ante. In many economic settings, it is instead an informed individual
who communicates the state to the agent over time. Such communication
often takes the form of unverifiable cheap-talk messages, especially
if the speaker wishes to convey inconclusive news about the state.

We consider a cheap-talk game between a receiver who experiences news
utility with diminishing sensitivity, and a benevolent sender who
knows the state and wishes to maximize the receiver's welfare. At
first glance, one may think that the sender can simply implement the
receiver's favorite information structure in the equilibrium of the
game, given that the two parties have aligned incentives. While this
is true with two-part linear news utility, we show that the receiver's
diminishing sensitivity leads to a credibility problem for the sender.

\subsection{Cheap Talk with an Informed and Benevolent Sender}

Let a finite set of cheap-talk messages $M$ with $|M|\ge2$ be fixed.
The sender learns the true state of the world $\theta\in\{A,B\}$
in period $t=0$. The receiver's consumption preference is common
knowledge, which we normalized to be $v(c_{A})=1,$ $v(c_{B})=0.$
In every period $t=1,2,...,T-1$, the sender conveys a message $m\in M$
to the receiver. The sender's communication strategy in period $t$
is given by a mixture over messages $\sigma_{t}(\cdot\mid h^{t-1},\theta)\in\Delta(M)$
that can depend on the history $h^{t-1}$ of messages so far and the
true state $\theta$. The sender cannot commit to how she will communicate
with the receiver in different states of the world.

The sender is benevolent and wants to maximize the receiver's welfare.
At the end of period $T$, if the receiver has experienced the belief
path $(\pi_{t})_{t=0}^{T}$, then the sender's total payoff in the
game is $\sum_{t=1}^{T}\mu\left(\pi_{t}-\pi_{t-1}\right)$ (we may
ignore the physical consumption term since neither party can affect
it). The state of the world determines the final belief $\pi_{T}$
and thus affects news utility in the final period, so the sender expects
different payoffs from sending the same sequence of messages in different
states.\footnote{In particular, this is not a cheap-talk game with state-independent
sender payoffs, as in \citet{lipnowski2020}.}

We analyze perfect-Bayesian equilibria of the cheap talk game, under
some off-path belief refinements.
\begin{defn}
\label{defn:pbe}A\emph{ perfect-Bayesian equilibrium} consists of
sender's strategy $\sigma^{*}=(\sigma_{t}^{*})_{t=1}^{T-1}$ together
with receiver's beliefs $p^{*}:\cup_{t=0}^{T-1}H^{t}\to[0,1]$, where:
\begin{itemize}
\item For every $1\le t\le T-1,$ $h^{t-1}\in H^{t-1}$ and $\theta\in\{A,B\},$
$\sigma^{*}$ maximizes the receiver's total expected news utility
in periods $t,...,T-1,T$ conditional on having reached the public
history $h^{t-1}$ in state $\theta$ at the start of period $t$.
\item $p^{*}$ is derived by applying the Bayes' rule to $\sigma^{*}$ whenever
possible.
\end{itemize}
We make two belief-refinement restrictions:
\begin{itemize}
\item If $t\le T-1$, $h^{t}$ is a continuation history of $h^{\underline{t}}$,
and $p^{*}(h^{\underline{t}})\in\{0,1\},$ then $p^{*}(h^{t})=p^{*}(h^{\underline{t}}).$
\item The receiver's belief $\pi_{T}$ in period $T$ when state is $\theta$
puts probability 1 on $\theta$, regardless of the preceding history
$h^{T-1}\in H^{T-1}.$
\end{itemize}
\end{defn}
We will abbreviate a perfect-Bayesian equilibrium satisfying our off-path
belief refinements as an ``equilibrium.'' Our definition requires
that once the receiver updates his belief to 0 or 1, this belief stays
constant through the end of period $T-1$. In other words, the support
of his belief is non-expanding through the penultimate period.\footnote{This standard refinement was first used in \citet{grossman1986sequential}.
It rules out pathological off-path belief updates if the sender deviates
and sends a message perfectly indicative of one state following a
history where the receiver is fully convinced of the other state.} In period $T,$ the receiver updates his belief to reflect full confidence
in the true state of the world, regardless of his (possibly dogmatically
wrong) belief at the end of period $T-1$.

\emph{Babbling equilibria} always exists for any news-utility function
$\mu,$ message space $M,$ time horizon $T,$ and prior $\pi_{0}$.
In a babbling equilibrium, the sender mixes over messages in a state-independent
way, and the receiver keeps his prior belief $\pi_{0}$ after every
history up until period $T$. A babbling equilibrium implements one-shot
resolution for the receiver, as his belief stays constant and fully
resolves in the final period.

\subsection{The Credibility Problem and Babbling}

Are there equilibria where the sender gets a higher expected payoff
than the \emph{babbling payoff} of $\pi_{0}\mu(1-\pi_{0})+(1-\pi_{0})\mu(-\pi_{0})$?
By Proposition \ref{prop:gradual_good_preferred}, for a receiver
who has diminishing sensitivity but not loss aversion, there exist
a class of information structures that strictly improve on one-shot
resolution. But the next result proves none of these information structures
can be implemented in equilibrium.
\begin{prop}
\label{prop:unique_babbling}Suppose $\mu$ is symmetric around 0
and $\mu^{''}(x)<0$ for all $x>0$. For any $M,T,\pi_{0},$ the sender's
payoff in every equilibrium is equal to the babbling payoff.
\end{prop}
To understand why, consider the final period of communication $t=T-1.$
Suppose the state is bad and the sender must decide between revealing
the truth to decrease the receiver's belief from $\pi$ to 0, or sending
a positive message that increases the receiver's belief by $z>0$.
Such false hope in period $T-1$ gives positive news utility today
at the cost of increasing disappointment in the final period. But
diminishing sensitivity implies the marginal utility of positive news
today is larger than the marginal disutility of the incremental future
disappointment, $\mu(z)>\mu(-\pi)-\mu(-(\pi+z))$. This shows that
in equilibrium, the sender cannot communicate good news in either
state, otherwise she will be tempted to mimic the good-news messages
when the state is bad, destroying the credibility of these messages.
So the sender must babble in period $T-1,$ so we could treat period
$T-2$ as the last period of communication, and apply the same arguments
by backward induction.

In summary, diminishing sensitivity leads to a credibility problem
that prevents any informative communication, even though the players
share the same payoff function. In a cheap-talk setting with instrumental
information and anticipatory utility, \citet{kHoszegi2006emotional}
shows that a benevolent sender also distorts equilibrium communication
relative to the commitment benchmark. The breakdown in communication
is more complete in our setting, for the players get the same payoffs
as when communication is impossible.

The intuition we gave for the uniqueness of babbling up to payoffs
assumes the receiver is not loss averse \textemdash{} that is, $\mu$
is symmetric around 0. Babbling remains unique with a small amount
of loss aversion, but a high enough level of loss aversion can restore
the sender's credibility and enable non-babbling equilibria. (In the
next section, we will construct a family of such non-babbling equilibria.)

\begin{figure}
\begin{centering}
\includegraphics[scale=0.4]{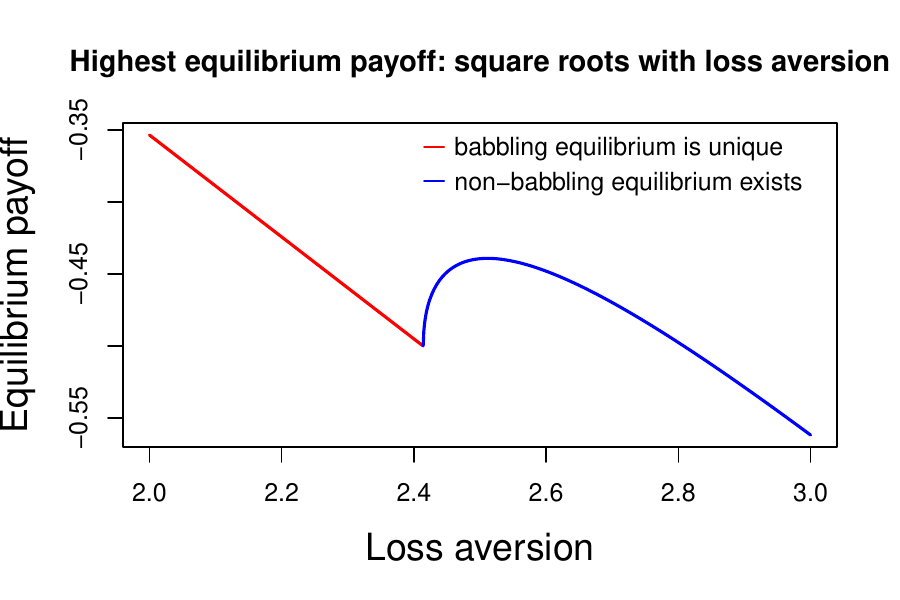}
\par\end{centering}
\caption{\label{fig:Highest-equilibrium-payoff} The babbling equilibrium is
unique up to payoffs for low values of $\lambda$, but there exists
an equilibrium with gradual good news for $\lambda\ge2.414$. Due
to the role of loss aversion in sustaining credible partial news,
a receiver with higher loss aversion may experience higher or lower
expected news utility in equilibrium than a receiver with lower loss
aversion.}
\end{figure}
 To illustrate, suppose $\mu(x)=\sqrt{x}$ for $x\ge0,$ $\mu(x)=-\lambda\sqrt{-x}$
for $x<0,$ $T=2$, and $\pi_{0}=\frac{1}{2}$. Figure \ref{fig:Highest-equilibrium-payoff}
plots the highest equilibrium payoff for different values of $\lambda.$
Receivers with higher $\lambda$ may enjoy higher equilibrium payoffs.
The reason for this non-monotonicity is that for low values of $\lambda$,
the babbling equilibrium is unique and increasing $\lambda$ decreases
expected news utility linearly. When the new, non-babbling equilibrium
emerges for large enough $\lambda$, the sender's behavior in the
new equilibrium depends on $\lambda$. Higher loss aversion carries
two countervailing effects: first, a \emph{non-strategic effect} of
hurting welfare when $\theta=B$, as the receiver must eventually
hear the bad news; second, an \emph{equilibrium effect} of changing
the relative amounts of good news in different periods conditional
on $\theta=A$. Receivers with an intermediate amount of loss aversion
enjoy higher expected news utility than receivers with low loss aversion,
as the equilibrium effect leads to better ``consumption smoothing''
of good news across time. But, the non-strategic effect eventually
dominates and receivers with high loss aversion experience worse payoffs
than receivers with low loss aversion.

\subsection{Deterministic Gradual Good News Equilibria}

When the receiver's loss aversion is high enough, there can exist
non-babbling equilibria in the cheap-talk game. We now analyze a family
of such non-babbling equilibria, where the receiver's belief monotonically
increases over time conditional on the good state. These equilibria
show that the gradual good news, one-shot bad news information structures
discussed in Section \ref{sec:gradual_vs_one_shot} can be sustained
without commitment.

An equilibrium $(M,\sigma^{*},p^{*})$ features \emph{deterministic}\footnote{This class of equilibria is slightly more restrictive than the gradual
good news, one-shot bad news information structures from Definition
\ref{def:gradual_one_shot}, because the sender may not randomize
between several increasing paths of beliefs in the good state.}\emph{gradual good news} (GGN equilibrium) if there exist a sequence
of constants $p_{0}\le p_{1}\le...\le p_{T-1}\le p_{T}$ with $p_{0}=\pi_{0}$,
$p_{T}=1$, and the receiver always has belief $p_{t}$ in period
$t$ when the state is good. By Bayesian beliefs, in the bad state
of any GGN equilibrium the sender must induce a belief of either $0$
or $p_{t}$ in period $t$, as any message not inducing belief $p_{t}$
is a conclusive signal of the bad state.

The class of GGN equilibria is non-empty, for it contains the babbling
equilibrium where $\pi_{0}=p_{0}=p_{1}=...=p_{T-1}<p_{T}=1$. The
number of \emph{intermediate beliefs} in a GGN equilibrium is the
number of distinct beliefs in the open interval $(\pi_{0},1)$ along
the sequence $p_{0},p_{1},...,p_{T-1}$. The babbling equilibrium
has zero intermediate beliefs.

The next proposition characterizes the set of all GGN equilibria with
at least one intermediate belief.
\begin{prop}
\label{prop:gradual_family}Let $P^{*}(\pi)\subseteq(\pi,1]$ be those
beliefs $x>\pi$ satisfying $\mu(x-\pi)+\mu(-x)=\mu(-\pi).$ Suppose
$\mu$ exhibits diminishing sensitivity and loss aversion. For $1\le J\le T-1,$
there exists a gradual good news equilibrium with the $J$ intermediate
beliefs $q^{(1)}<...<q^{(J)}$ if and only if $q^{(j)}\in P^{*}(q^{(j-1)})$
for every $j=1,...,J$, where $q^{(0)}:=\pi_{0}$.
\end{prop}
To interpret, $P^{*}(\pi)$ contains the set of beliefs $x>\pi$ such
that the sender is indifferent between inducing the two belief paths
$\pi\to x\to0$ and $\pi\to0.$ When $\mu$ is symmetric, this indifference
condition is never satisfied, which is the source of the credibility
problem for good-news messages. The same indifference condition pins
down the relationship between successive intermediate beliefs in GGN
equilibria. This condition ensures that in the bad state, the sender
is willing to randomize between revealing the state and lying with
an inconclusive piece of good news that moves the receiver to the
next intermediate belief.

We illustrate this result with the quadratic news utility.
\begin{cor}
\label{cor:P_star_set_quadratic} 1) With quadratic news utility,
$P^{*}(\pi)=\left\{ \pi\cdot\frac{\beta_{p}+\beta_{n}}{\beta_{p}-\beta_{n}}-\frac{\alpha_{n}-\alpha_{p}}{\beta_{p}-\beta_{n}}\right\} \cap(\pi,1).$

2a) If $\beta_{n}>\beta_{p}$, there cannot exist any gradual good
news equilibrium with more than one intermediate belief.

2b) If $\beta_{n}<\beta_{p}$, there can exist gradual good news equilibria
with more than one intermediate belief. For a given set of parameters
of the quadratic news-utility function and prior $\pi_{0}$, there
exists a uniform bound on the number of intermediate beliefs that
can be sustained in equilibrium across all $T$.

3) In any GGN equilibrium with quadratic news utility, intermediate
beliefs in the good state grow at an increasing rate.
\end{cor}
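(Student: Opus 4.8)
The plan is to reduce the whole corollary to an explicit formula for $P^{*}(\pi)$ and then iterate the affine map it defines. For part (1), substitute the quadratic $\mu$ into $N_{B}(x;\pi)=\mu(x-\pi)+\mu(-x)$ for $x>\pi$: since $x-\pi>0$ and $-x<0$, this equals $\alpha_{p}(x-\pi)-\beta_{p}(x-\pi)^{2}-\alpha_{n}x+\beta_{n}x^{2}$, while $N_{B}(0;\pi)=\mu(-\pi)=-\alpha_{n}\pi+\beta_{n}\pi^{2}$. The equation $N_{B}(x;\pi)=N_{B}(0;\pi)$ is therefore a polynomial in $x$ of degree at most two, with leading coefficient $\beta_{n}-\beta_{p}$. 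One checks directly that $x=\pi$ is always a solution (the first-period belief does not move, so both sides equal $\mu(-\pi)$). When $\beta_{n}\neq\beta_{p}$, dividing out $(x-\pi)$ leaves exactly one other root, which Vieta's formula identifies as $x^{*}(\pi):=\pi\frac{\beta_{p}+\beta_{n}}{\beta_{p}-\beta_{n}}-\frac{\alpha_{n}-\alpha_{p}}{\beta_{p}-\beta_{n}}$; intersecting with $(\pi,1)$ gives the claimed $P^{*}(\pi)$. When $\beta_{n}=\beta_{p}$ the equation is linear with only root $x=\pi\notin(\pi,1)$, so $P^{*}(\pi)=\emptyset$, consistent with the formula degenerating. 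This step is routine algebra; the only care needed is verifying $x=\pi$ is a root so the quadratic factors cleanly.

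By Proposition \ref{prop:gradual_family}, a GGN equilibrium with intermediate beliefs $q^{(1)}<\dots<q^{(J)}$ exists (for $T\geq J+1$) iff $q^{(j)}=x^{*}(q^{(j-1)})$ and $q^{(j)}\in(q^{(j-1)},1)$ for each $j$, with $q^{(0)}=\pi_{0}$; so the admissible chains are precisely the strictly increasing forward orbits of $\pi_{0}$ under the affine map $x^{*}$ that stay below $1$. Write $s:=\frac{\beta_{p}+\beta_{n}}{\beta_{p}-\beta_{n}}$ for its slope. For part (2a), $\beta_{n}>\beta_{p}$ gives $s<0$, so $x^{*}$ is strictly decreasing; if $q^{(0)}<q^{(1)}$ then $q^{(2)}=x^{*}(q^{(1)})<x^{*}(q^{(0)})=q^{(1)}$, contradicting $q^{(1)}<q^{(2)}$, hence $J\leq1$. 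For part (2b), $\beta_{n}<\beta_{p}$ gives $s>1$; the unique fixed point of $x^{*}$ is $c^{*}=\frac{\alpha_{n}-\alpha_{p}}{2\beta_{n}}\geq0$, and $(x^{*})^{j}(\pi_{0})-c^{*}=s^{j}(\pi_{0}-c^{*})$. If $\pi_{0}>c^{*}$ the orbit is strictly increasing and leaves $(\pi_{0},1)$ after finitely many steps, at step $J^{*}(\pi_{0}):=\max\{J:s^{J}(\pi_{0}-c^{*})<1-c^{*}\}<\infty$; taking parameters with $c^{*}<1$ and $\pi_{0}$ in the nonempty interval $(c^{*},c^{*}+(1-c^{*})/s^{2})$ makes $J^{*}\geq2$, so Proposition \ref{prop:gradual_family} yields GGN equilibria with up to $J^{*}$ intermediate beliefs once $T$ is large enough. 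Conversely, because the chain is forced to be this orbit, no equilibrium — for any $T$ — can have more than $J^{*}(\pi_{0})$ intermediate beliefs (and if $\pi_{0}\leq c^{*}$ then $P^{*}(\pi_{0})=\emptyset$ and only babbling remains), which is the claimed uniform bound.

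For part (3), first note a GGN equilibrium with $J\geq2$ intermediate beliefs forces $\beta_{n}<\beta_{p}$, by part (2a) and the $\beta_{n}=\beta_{p}$ degeneracy above; for $J\leq1$ the statement is vacuous, so assume $s>1$. The increments obey $q^{(j)}-q^{(j-1)}=x^{*}(q^{(j-1)})-q^{(j-1)}=(s-1)q^{(j-1)}-\frac{\alpha_{n}-\alpha_{p}}{\beta_{p}-\beta_{n}}$, which is strictly increasing in $q^{(j-1)}$ since $s-1>0$; as $q^{(0)}<q^{(1)}<\dots<q^{(J-1)}$, the increments strictly increase in $j$, i.e., the good-state belief grows at an increasing rate. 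Equivalently, $(q^{(j+1)}-q^{(j)})-(q^{(j)}-q^{(j-1)})=(s-1)(q^{(j)}-q^{(j-1)})>0$.

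The genuinely delicate step is part (2b): one must establish both directions — existence of two-or-more-step chains for large horizons, which requires the prior to lie in a window leaving the orbit room to take at least two steps below $1$, and uniformity of the bound over $T$, which holds because Proposition \ref{prop:gradual_family} together with part (1) pin the chain down to be the orbit of $\pi_{0}$, and geometric expansion away from the repelling fixed point $c^{*}$ caps its length regardless of the horizon. Parts (1), (2a), and (3) are then short consequences of the affine-map description.
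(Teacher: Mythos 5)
Your proposal is correct and follows essentially the same route as the paper: both reduce the indifference condition $N_{B}(x;\pi)=N_{B}(0;\pi)$ to a quadratic in which $x=\pi$ factors out, obtain the affine recursion $q^{(j+1)}=q^{(j)}\frac{\beta_{p}+\beta_{n}}{\beta_{p}-\beta_{n}}-\frac{\alpha_{n}-\alpha_{p}}{\beta_{p}-\beta_{n}}$, and read off parts 2a), 2b), and 3) from the sign and magnitude of its slope. Your explicit orbit/fixed-point analysis around $c^{*}=\frac{\alpha_{n}-\alpha_{p}}{2\beta_{n}}$ in fact spells out more completely than the paper does why $\beta_{n}>\beta_{p}$ precludes a second intermediate belief and why the bound on the number of intermediate beliefs is uniform in $T$.
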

For the case of quadratic news utility, this result provides a closed-form
characterization of the successive intermediate beliefs. It also shows
every GGN equilibrium involves progressively larger pieces of good
news in the good state, $q^{(j+1)}-q^{(j)}>q^{(j)}-q^{(j-1)}$. The
convex time-path of equilibrium beliefs is due to diminishing sensitivity.
If the sender is indifferent between providing $d$ amount of false
hope and truth-telling in the bad state when the receiver has prior
belief $\pi_{L}$, then she strictly prefers providing the same amount
of false hope over truth-telling at any more optimistic prior belief
$\pi_{H}>\pi_{L}$. The false hope generates the same positive news
utility in both cases, but an extra $d$ units of disappointment matters
less when added a baseline disappointment level of $\pi_{H}$ rather
than $\pi_{L}$, thanks to diminishing sensitivity.

Equilibrium beliefs in the good state grow at an increasing rate,
but must be bounded above by 1. So, there exists some uniform bound
$\bar{J}$ on the number of intermediate beliefs depending only on
the prior belief $\pi_{0}$ and parameters of the news-utility function.

As an illustration, consider the quadratic news utility with $\alpha_{p}=2$,
$\alpha_{n}=2.1$, $\beta_{p}=1$, and $\beta_{n}=0.2$. Starting
at the prior belief of $\pi_{0}=\frac{1}{3}$, Figure \ref{fig:GGN_example}
shows the longest possible sequence of intermediate beliefs in any
GGN equilibrium for arbitrarily large $T$. Since the $P^{*}$ sets
are either empty sets or singleton sets for the quadratic news utility,
Figure \ref{fig:GGN_example} also contains all the possible beliefs
in any state of any GGN equilibrium with these parameters.
\begin{flushleft}
\begin{figure}
\begin{centering}
\includegraphics[scale=0.35]{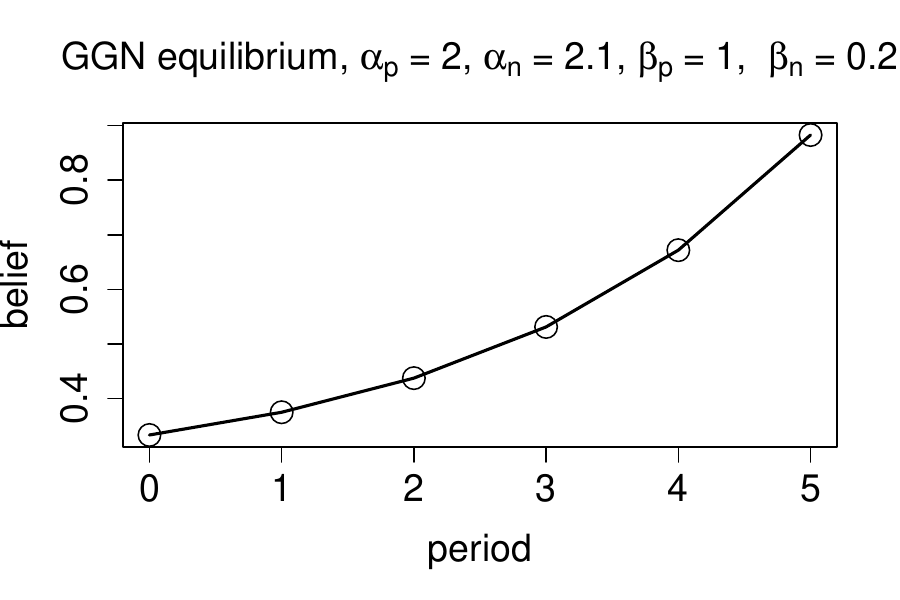}
\par\end{centering}
\caption{\label{fig:GGN_example}The longest possible sequence of GGN intermediate
beliefs starting with prior $\pi_{0}=\frac{1}{3}$. For quadratic
news utility, equilibrium GGN beliefs always increase at an increasing
rate in the good state.}
\end{figure}
Beyond the quadratic case, the intuition that diminishing sensitivity
should cause the receiver to have a convex time-path of equilibrium
beliefs holds more generally. The next result formalizes this relationship.
It shows that when diminishing sensitivity is combined with a pair
of regularity conditions, intermediate beliefs grow at an increasing
rate in any GGN equilibrium. These conditions are satisfied, for example,
by the square-roots news utility with loss aversion.
\par\end{flushleft}
\begin{prop}
\label{prop:ggn_increasing_rate}Suppose $\mu$ exhibits diminishing
sensitivity, $|P^{*}(\pi)|\le1$ and $\frac{\partial}{\partial\epsilon}\left[\mu(\epsilon)+\mu(-(\pi+\epsilon))\right]|_{\epsilon=0}>0$
for all $\pi\in(0,1)$. Then, in any GGN equilibrium with intermediate
beliefs $q^{(1)}<...<q^{(J)}$, we get $q^{(j)}-q^{(j-1)}<q^{(j+1)}-q^{(j)}$
for all $1\le j\le J-1$.
\end{prop}
The first regularity condition requires that the sender is indifferent
between the belief paths $\pi\to x\to0$ and $\pi\to0$ for at most
one $x>\pi.$ It is a technical assumption that lets us prove our
result, but we suspect the conclusion also holds under some relaxed
conditions. The second regularity condition says in the bad state,
the total news utility associated with an $\epsilon$ amount of false
hope is higher than truth-telling for small $\epsilon$.

\section{Related Literature and Predictions of Other Belief-Based Utility
Models \label{sec:Some-Related-Models}}

\subsection{\label{subsec:Experiments-on-Information}Experiments on Information
Preference}

A number of experimental papers have tested whether people prefer
one-shot resolution by asking subjects to choose how they wish to
learn about their prize for the experiment, with one-shot resolution
as a feasible information structure. The empirical results are mixed.
After accounting for preference over the timing of resolution,\footnote{Information structures that reveal the prize gradually will resolve
uncertainty earlier than a one-shot resolution structure that reveals
the prize at the end of the experiment, but later than a one-shot
resolution structure that reveals the prize immediately.} \citet{falk2016beliefs} and \citet*{bellemare2005myopic} find evidence
that subjects prefer one-shot resolution, while \citet*{nielsen2020jet,masatlioglu2017intrinsic,zimmermann2014clumped,budescu2001same}
find evidence against it. News utility with diminishing sensitivity
may explain these mixed results, as it predicts one-shot resolution
is neither the best nor the worst information structure, so it may
or may not be chosen depending on what other information structures
are feasible in a particular experiment. On the other hand, these
experimental results are harder to reconcile with theories that either
predict agents always choose one-shot resolution or predict agents
always avoid it.

Two experiments have examined people's preference over the skewness
of news, with mixed results. Tables 10 and 11 in \citet{nielsen2020jet}
report that subjects prefer negatively skewed news, as predicted by
news utility with diminishing sensitivity. But, \citet*{masatlioglu2017intrinsic}
find that agents prefer positively skewed news. In showing that a
classical assumption of reference dependence leads to a prediction
about preference over news skewness, we hope to stimulate further
empirical work on this topic. 

Finally, consistent with the mechanism discussed in Section \ref{sec:gradual_vs_one_shot},
\citet*{gul_experiment} find in an experiment that more people choose
gradual information over early one-shot resolution when the gradual
information features gradual good news rather than gradual bad news.

\subsection{Related Work on New Utility}

Since \citet{kHoszegi2009reference}, several other authors have analyzed
the implications of news utility in different settings: asset pricing
\citep{pagel2016expectations}, life-cycle consumption \citep{pagel2017expectations},
portfolio choice \citep{pagel2018news}, and mechanism design \citep{duraj2018mechanism}.
These papers focus on Bayesian agents with two-part linear gain-loss
utilities and do not study the role of diminishing sensitivity to
news.

Interpreting monetary gains and losses as news about future consumption,
experiments that show risk-seeking behavior when choosing between
loss lotteries and risk-averse behavior when choosing between gain
lotteries provide evidence for diminishing sensitivity over consumption
news (see e.g., \citet{rabin2009narrow}). In the same vein, papers
in the finance literature that use diminishing sensitivity over monetary
gains and losses to explain the disposition effect \citep*{shefrin1985disposition,kyle2006prospect,barberis2012realization,henderson2012prospect}
also provide indirect evidence for diminishing sensitivity over consumption
news.

\citet*{bowman1999loss} study a consumption-based reference-dependent
model with diminishing sensitivity. A critical difference is that
their reference points are based on past habits, not rational expectations.
We are not aware of existing work that focuses on how diminishing
sensitivity matters for information design with news utility.

\subsection{Predictions of Other Belief-Based Utility Models}

In general, papers on belief-based utility have highlighted two sources
of felicity: \emph{levels} of belief about future consumption utility
(``anticipatory utility,'' e.g., \citet{kHoszegi2006emotional,eliaz2006can,schweizer2018optimal})
and \emph{changes} in belief about future consumption utility (``news
utility'' and ``suspense and surprise'' \citep*{ely2015suspense}).
For the latter, some function of both the prior belief and the posterior
belief serves as the carrier of utility, while a given posterior belief
brings the same anticipatory utility for all priors \citep{eliaz2006can}.
The rich information preference under news utility with diminishing
sensitivity contrast against more stark predictions of the other commonly
used models of belief-based utility in the behavioral literature.

\subsubsection{News Utility without Diminishing Sensitivity}

The literature on reference-dependent preferences and news utility
has focused on two-part linear gain-loss utility functions, which
violate diminishing sensitivity. If $\mu$ is two-part linear with
loss aversion, then it follows from the martingale property of Bayesian
beliefs that one-shot resolution is weakly optimal for the agent among
all information structures. If there is strict loss aversion, then
one-shot resolution does strictly better than any information structure
that resolves uncertainty gradually. 

\subsubsection{Anticipatory Utility}

In our setup, an agent who experiences anticipatory utility gets $A\left(\sum_{\theta\in\Theta}\pi_{t}(\theta)\cdot v(c_{\theta})\right)$
if he ends period $t$ with posterior belief $\pi_{t}\in\Delta(\Theta),$
where $A:\mathbb{R}\to\mathbb{R}$ is a strictly increasing anticipatory-utility
function. When $A$ is the identity function (as in \citet{kHoszegi2006emotional}),
the solution to the optimization problem would be unchanged if we
modified our model and let the agent experience both anticipatory
utility and news utility. This is because by the martingale property,
the agent's ex-ante expected anticipatory utility in a given period
is the same across all information structures. So, the ranking of
information structures entirely depends on the news utility they generate.

For a general $A$, if the agent only experiences anticipatory utility,
not news utility, then there exists an optimal information structure
that only releases information in $t=1$, followed by uninformative
signals in all subsequent periods (see Online Appendix \ref{subsec:Optimal-Information-Anticipatory}).
By contrast, this kind of one-shot resolution is not optimal when
the agent has diminishing sensitivity and weak enough loss aversion. 

\subsubsection{Suspense and Surprise}

\citet*{ely2015suspense} study dynamic information design with a
Bayesian receiver who derives utility from suspense or surprise. They
propose and study an original utility function over belief paths where
larger belief movements always bring greater felicity. By contrast,
because our states are associated with different consumption consequences,
changes in beliefs may increase or decrease the receiver's utility
depending on whether the news is good or bad. While one-shot resolution
is suboptimal in both \citet*{ely2015suspense}'s problem and our
problem (under some conditions), other results differ. For example,
information structures featuring gradual bad news, one-shot good news
are worse than one-shot resolution in our problem, while one-shot
resolution is the worst possible information structure in \citet*{ely2015suspense}'s
problem.

\citet*{ely2015suspense} also discuss state-dependent versions of
suspense and surprise utilities, but this extension does not embed
our model. Suppose there are two states, $\Theta=\{G,B\},$ and the
agent has the suspense objective $\sum_{t=0}^{T-1}u\left(\mathbb{E}_{t}(\sum_{\theta}\alpha_{\theta}\cdot(\pi_{t+1}(\theta)-\pi_{t}(\theta))^{2}\right)$
or the surprise objective $\sum_{t=1}^{T}u\left(\sum_{\theta}\alpha_{\theta}\cdot(\pi_{t}(\theta)-\pi_{t-1}(\theta))^{2}\right)$,
where $\alpha_{G},\alpha_{B}>0$ are state-dependent scaling weights.
We must have $\pi_{t+1}(G)-\pi_{t}(G)=-(\pi_{t+1}(B)-\pi_{t}(B))$,
so pathwise $(\pi_{t+1}(G)-\pi_{t}(G))^{2}=(\pi_{t+1}(B)-\pi_{t}(B))^{2}$.
This shows that the new objectives obtained by applying two possibly
different scaling weights $\alpha_{G}\ne\alpha_{B}$ to states $G$
and $B$ are identical to the ones that would be obtained by applying
the \emph{same} scaling weight $\alpha=\frac{\alpha_{G}+\alpha_{B}}{2}$
to both states. Due to this symmetry in preference, the optimal information
structure for entertaining an agent with state-dependent suspense
or surprise utility treats the two states symmetrically, in contrast
to a central prediction of diminishing sensitivity in our model.

\subsubsection{Designing Beliefs through Non-Informational Channels}

\citet{brunnermeier2005optimal} and \citet{macera2014dynamic} study
the optimal design of beliefs for agents with belief-based utilities
that differ from the news-utility setup we consider. Another important
distinction is that we focus on the design of \emph{information}:
changes in the agent's belief derive from Bayesian updating an exogenous
prior, using the information conveyed by an information structure
or by a sender. \citet{macera2014dynamic} considers a non-Bayesian
agent who freely chooses a path of beliefs, while knowing the actual
state of the world. \citet{brunnermeier2005optimal} study the ``opposite''
problem to ours, where the agent freely chooses a prior belief (over
the sequence of state realizations) at the start of the game, then
updates belief about future states through an exogenously given information
structure.

\subsection{Related Decision-Theoretic Work on Information Preference}

Several paper in decision theory have studied models of preference
over dynamic information structures. \citet{dillenberger2010preferences}
shows that preference for one-shot resolution of uncertainty is equivalent
to a weakened version of independence, provided the preference satisfies
recursivity. This result does not apply here because our mean-based
model of news utility violates recursivity \textemdash{} it can be
shown that a news-utility agent may strictly prefer a 0\% chance of
winning a prize over a 1\% chance of winning it, if he will gradually
learn about the outcome of the lottery and has high enough loss aversion
(see Online Appendix \ref{subsec:Preference-for-Dominated}). \citet{dillenberger2018additive}
axiomatize a general class of additive belief-based preferences in
the domain of two-stage lotteries, relaxing recursivity and the independence
axiom. In the case of $T=2,$ our news-utility model belongs to the
class they characterize. Under this specialization, our work may be
thought of as studying the information design problem, with and without
commitment, using some of \citet{dillenberger2018additive}'s additive
belief-based preferences. \citet*{gul2019random} axiomatize a class
of preferences over non-instrumental information called risk consumption
preferences, including a novel ``peak-trough'' utility specification.
In contrast, we study the implications diminishing sensitivity, a
classical assumption from the behavioral economics literature. Our
model is not a risk consumption preference (see Online Appendix \ref{subsec:Risk-Consumption-Preferences}).

\subsection{Related Work in Dynamic Information Design}

In a setting without behavioral preferences, \citet{li2018sequential}
and \citet{wu2018sequential} consider a group of senders with commitment
power, sequentially sending signals to persuade a single receiver.
The receiver takes an action after observing all signals. In their
settings, every equilibrium in their setting can be converted into
a payoff-equivalent ``one-step'' equilibrium where the first sender
sends the joint signal implied by the old equilibrium, while all subsequent
senders babble uninformatively. Dynamics matter more in our setting,
as different sequences of interim beliefs cause the agent to experience
different amounts of total news utility.

\citet{lipnowski2018disclosure} study a static model of information
design with a psychological receiver whose welfare depends directly
on posterior belief. They discuss an application to a mean-based news-utility
model \emph{without} diminishing sensitivity in their Appendix A,
finding that either one-shot resolution or no information is optimal.
We focus on the implications of diminishing sensitivity. Our work
also differs in that we study a dynamic problem and examine equilibria
without commitment.

\section{\label{sec:Concluding-Discussion}Conclusion}

In this work, we have studied how diminishingly sensitive gain-loss
utilities applied to changes in beliefs affect the agent's informational
preferences. If we think that diminishing sensitivity to the magnitude
of news is psychologically realistic in this domain, then the stark
predictions of the ubiquitous two-part linear models may be misleading.
In the presence of diminishing sensitivity, richer informational preferences
emerge.

An agent's consumption preference over the states can determine his
preference between an information structure that delivers news gradually
and another that results in one-shot resolution. In general, one-shot
resolution is neither the best way to get information nor the worst
way \textemdash{} skewness matters. One-shot resolution is strictly
better than information structures with strictly gradual bad news,
one-shot good news. But, it is strictly worse than information structures
with strictly gradual good news, one-shot bad news, provided loss
aversion is not too high.

For an informed sender who lacks commitment power, diminishing sensitivity
leads to novel credibility problems that inhibit any meaningful communication
when the receiver has no loss aversion. High enough loss aversion
can restore the equilibrium credibility of good-news messages, and
the receiver's equilibrium welfare may be non-monotonic in loss aversion.
We construct a family of non-babbling equilibria with gradual good
news when loss aversion is high enough, finding that the sender must
communicate increasingly larger pieces of good news over time in the
good state.

\bibliographystyle{ecta}
\bibliography{news_utility_info}

\appendix
\begin{center}
\textbf{\Large{}Appendix}{\Large\par}
\par\end{center}

\renewcommand{\thecor}{A.\arabic{cor}}
\renewcommand{\theprop}{A.\arabic{prop}}
\renewcommand{\thelem}{A.\arabic{lem}}
\renewcommand{\thefigure}{A.\arabic{figure}}
\renewcommand{\thetable}{A.\arabic{table}}
\setcounter{cor}{0}
\setcounter{lem}{0}
\setcounter{prop}{0}
\setcounter{figure}{0}
\setcounter{table}{0}

\section{\label{sec:Omitted-Proofs}Proofs of the Main Results}

This appendix contains the proofs of the results stated in the main
text. Proofs of auxiliary results stated in the appendix appear in
Online Appendix \ref{sec:Secondary_proofs}.

In the proofs, we will often use the following fact about news-utility
functions with diminishing sensitivity. We omit its simple proof.
\begin{fact}
Let $d_{1},d_{2}>0$ and suppose $\mu(0)=0$.
\begin{itemize}
\item (sub-additivity in gains) If $\mu^{''}(x)<0$ for all $x>0$, then
$\mu(d_{1}+d_{2})<\mu(d_{1})+\mu(d_{2}).$
\item (super-additivity in losses) If $\mu^{''}(x)>0$ for all $x<0,$ then
$\mu(-d_{1}-d_{2})>\mu(-d_{1})+\mu(-d_{2})$
\end{itemize}
\end{fact}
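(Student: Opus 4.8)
The plan is to derive both inequalities from the elementary fact that a strictly concave function vanishing at the origin is strictly sub-additive on the nonnegative reals, and, dually, a strictly convex function vanishing at the origin is strictly super-additive on the nonpositive reals. Everything reduces to a single chord (Jensen) comparison against the origin.

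For the gains statement, fix $d_1,d_2>0$ and set $s:=d_1+d_2$ and $t:=d_1/s\in(0,1)$, so that $d_1=ts+(1-t)\cdot 0$ and $d_2=(1-t)s+t\cdot 0$. Since $\mu$ is continuous on $[0,s]$ and satisfies $\mu''(x)<0$ for all $x\in(0,s)$, it is strictly concave on $[0,s]$; applying strict concavity to each of these two convex combinations and using $\mu(0)=0$ gives $\mu(d_1)>t\,\mu(s)+(1-t)\mu(0)=t\,\mu(s)$ and $\mu(d_2)>(1-t)\mu(s)$. Adding the two inequalities yields $\mu(d_1)+\mu(d_2)>\mu(s)=\mu(d_1+d_2)$, which is sub-additivity in gains.

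The losses statement follows by the symmetric argument, or more quickly by reduction: define $\tilde\mu(x):=-\mu(-x)$ on $[0,1]$, which is continuous, satisfies $\tilde\mu(0)=0$, and has $\tilde\mu''(x)=-\mu''(-x)<0$ for $x>0$ (since $\mu''<0$ on the negative axis is the hypothesis here, meaning $\mu''>0$... to be precise, the super-additivity case assumes $\mu''(x)>0$ for $x<0$, so $\tilde\mu''(x)=-\mu''(-x)<0$). By the gains case already proved, $\tilde\mu(d_1+d_2)<\tilde\mu(d_1)+\tilde\mu(d_2)$; unpacking the definition of $\tilde\mu$ and multiplying through by $-1$ gives $\mu(-d_1-d_2)>\mu(-d_1)+\mu(-d_2)$, which is super-additivity in losses.

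The only point requiring a moment's care is that $\mu$ need not be differentiable at $0$, so one cannot simply invoke a second-derivative characterization of concavity on the whole of $[0,s]$; but continuity at $0$ together with $\mu''<0$ on the open interval $(0,s)$ already implies strict concavity on the closed interval $[0,s]$ (equivalently, run the chord argument on $[\varepsilon,s]$ and let $\varepsilon\downarrow 0$, or integrate the strictly decreasing derivative $\mu'$ over $(0,d_1)$ versus $(d_2,d_1+d_2)$ and use the fundamental theorem of calculus together with $\mu(0)=0$). Beyond this technicality there is no real obstacle: the statement is just Jensen's inequality for the strictly concave (resp.\ convex) $\mu$ applied to a mixture of an interior point with the origin.
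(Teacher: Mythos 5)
Your proof is correct: the paper explicitly omits the proof of this Fact as "simple," and your chord/Jensen argument (writing $d_1$ and $d_2$ as convex combinations of $0$ and $d_1+d_2$, using $\mu(0)=0$, and reducing the losses case to the gains case via $\tilde\mu(x)=-\mu(-x)$) is exactly the standard argument the authors have in mind. Your handling of the possible non-differentiability at $0$ — continuity at $0$ plus $\mu''<0$ on the open interval suffices for strict concavity on the closed interval — is the right technical care to take.
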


\subsection{Proof of Proposition \ref{prop:opposite}}
\begin{proof}
In the less preferred state, the agent gets $\mu(-\rho_{0})$ with
one-shot resolution , but $\sum_{t=1}^{T}\mu(\rho_{t}-\rho_{t-1})$
with gradual bad news, one-shot good news. For each $t,$ $\rho_{t}-\rho_{t-1}\le0,$
and furthermore $\sum_{t=1}^{T}\rho_{t}-\rho_{t-1}=-\rho_{0}$ by
telescoping and using the fact that $\rho_{T}=0$. Due to super-additivity
in losses, we get that $\mu(-\rho_{0})\ge\sum_{t=1}^{T}\mu(\rho_{t}-\rho_{t-1})$
almost surely when the state is bad. Also, because there is \emph{strictly}
gradual bad news, $\mathbb{E}[\sum_{t=1}^{T}\mu(\rho_{t}-\rho_{t-1})\mid\rho_{T}=0]<\mu(-\rho_{0})$.

In the more preferred state, he gets $\mu(1-\rho_{0})$ with one-shot
resolution. With gradual bad news, one-shot good news, let $\hat{T}\le T$
be the first period where $\rho_{\hat{T}}>\rho_{\hat{T}-1}$. His
news utility is $\left[\sum_{t=1}^{\hat{T}-1}\mu(\rho_{t}-\rho_{t-1})\right]+\mu(1-\rho_{\hat{T}-1})$
where each $\rho_{t}-\rho_{t-1}\le0$ for $1\le t\le\hat{T}-1$. Again
by super-additivity in losses, $\sum_{t=1}^{\hat{T}-1}\mu(\rho_{t}-\rho_{t-1})\le\mu(\rho_{\hat{T}-1}-\rho_{0})$.
By sub-additivity in gains, $\mu(1-\rho_{\hat{T}-1})\le\mu(\rho_{0}-\rho_{\hat{T}-1})+\mu(1-\rho_{0})\le-\mu(\rho_{\hat{T}-1}-\rho_{0})+\mu(1-\rho_{0})$,
where the weak inequality follows since $\lambda\ge1.$ Putting these
pieces together, 
\begin{align*}
\left[\sum_{t=1}^{\hat{T}-1}\mu(\rho_{t}-\rho_{t-1})\right]+\mu(1-\rho_{\hat{T}-1}) & \le\mu(\rho_{\hat{T}-1}-\rho_{0})-\mu(\rho_{\hat{T}-1}-\rho_{0})+\mu(1-\rho_{0})=\mu(1-\rho_{0}).
\end{align*}
Therefore, strictly gradual bad news, one-shot good news gives strictly
lower utility than one-shot resolution in expectation, and almost
surely weakly lower utility ex-post.
\end{proof}

\subsection{Proof of Proposition \ref{prop:gradual_good_preferred}}
\begin{proof}
In the preferred state, the agent gets $\mu(1-\rho_{0})$ with one-shot
resolution, but $\sum_{t=1}^{T}\mu(\rho_{t}-\rho_{t-1})$ with gradual
good news, one-shot bad news. For each $t,$ $\rho_{t}-\rho_{t-1}\ge0,$
and furthermore $\sum_{t=1}^{T}\rho_{t}-\rho_{t-1}=1-\rho_{0}$ by
telescoping and using the fact that $\rho_{T}=1$. Due to sub-additivity
in gains, we get that $\sum_{t=1}^{T}\mu(\rho_{t}-\rho_{t-1})\ge\mu(1-\rho_{0})$
when the state is good. Also, because there is \emph{strictly} gradual
good news, $\mathbb{E}[\sum_{t=1}^{T}\mu(\rho_{t}-\rho_{t-1})\mid\rho_{T}=1]>\mu(1-\rho_{0})$.

In the less preferred state, he gets $\mu(-\rho_{0})$ with one-shot
resolution. With gradual good news, one-shot bad news, let $\hat{T}\le T$
be the first period where the $X_{\hat{T}}=0$. His news utility is
$\left[\sum_{t=1}^{\hat{T}-1}\mu(\rho_{t}-\rho_{t-1})\right]+\mu(-\rho_{\hat{T}-1})$
where each $\rho_{t}-\rho_{t-1}\ge0$ for $1\le t\le\hat{T}-1$. Again
by sub-additivity in gains, $\sum_{t=1}^{\hat{T}-1}\mu(\rho_{t}-\rho_{t-1})\ge\mu(\rho_{\hat{T}-1}-\rho_{0})$.
By super-additivity in losses, $\mu(-\rho_{\hat{T}-1})\ge\mu(-(\rho_{\hat{T}-1}-\rho_{0}))+\mu(-\rho_{0})=-\mu(\rho_{\hat{T}-1}-\rho_{0})+\mu(-\rho_{0})$,
where we used the symmetry of $\mu$ around 0 in the last equality.
Putting these pieces together, 
\begin{align*}
\left[\sum_{t=1}^{\hat{T}-1}\mu(\rho_{t}-\rho_{t-1})\right]+\mu(-\rho_{\hat{T}-1}) & \ge\mu(\rho_{\hat{T}-1}-\rho_{0})-\mu(\rho_{\hat{T}-1}-\rho_{0})+\mu(-\rho_{0})=\mu(-\rho_{0}).
\end{align*}
Therefore, strictly gradual good news, one-shot bad news provides
strictly higher utility than one-shot resolution in expectation, and
almost surely weakly higher utility ex-post.
\end{proof}

\subsection{Proof of Corollary \ref{cor:gradual_good_loss_averse}}
\begin{proof}
This follows from Proposition \ref{prop:gradual_good_preferred} by
continuity.
\end{proof}

\subsection{Proof of Corollary \ref{cor:diversity}}
\begin{proof}
When $v(c_{A})=0$ and $v(c_{B})=1$, gradual information falls in
the class of strictly gradual bad news, one-shot good news. So, the
first claim follows from Proposition \ref{prop:opposite}. When $v(c_{A})=1$
and $v(c_{B})=0$, gradual information falls in the class of strictly
gradual good news, one-shot bad news. So, the second claim follows
from Corollary \ref{cor:gradual_good_loss_averse}.
\end{proof}

\subsection{Proof of Proposition \ref{prop:no_diversity}}
\begin{proof}
(1) Suppose $\mu$ is two-part linear with $\mu(x)=x$ for $x\ge0,$
$\mu(x)=\lambda x$ for $x<0,$ where $\lambda\ge0$. Suppose $v(c_{A})=1$,
$v(c_{B})=0.$ In each period, $\mathbb{E}[\mu(\pi_{t}-\pi_{t-1})]=\mathbb{E}[(\pi_{t}-\pi_{t-1})^{+}-\lambda(\pi_{t}-\pi_{t-1})^{-}]$.
By the martingale property, $\mathbb{E}[(\pi_{t}-\pi_{t-1})^{+}]=\mathbb{E}[(\pi_{t}-\pi_{t-1})^{-}]$,
so $\mathbb{E}[\mu(\pi_{t}-\pi_{t-1})]=\frac{1}{2}(1-\lambda)\mathbb{E}[|\pi_{t}-\pi_{t-1}|]$.
This shows total expected news utility is $\mathbb{E}[\sum_{t=1}^{T}\mu(\pi_{t}-\pi_{t-1})]=\frac{1}{2}(1-\lambda)\mathbb{E}[\sum_{t=1}^{T}|\pi_{t}-\pi_{t-1}|]$.
Note that $\mathbb{E}[\sum_{t=1}^{T}|\pi_{t}-\pi_{t-1}|]$ is strictly
larger for gradual information than for one-shot resolution. If $\lambda>1,$
the agent strictly prefers one-shot resolution. If $0\le\lambda<1,$
the agent strictly prefers gradual information. If $\lambda=1,$ the
agent is indifferent.

Now suppose $v(c_{A})=0$, $v(c_{B})=1.$ By the same arguments, total
expected news utility is $\mathbb{E}[\sum_{t=1}^{T}\mu(\rho_{t}-\rho_{t-1})]=\frac{1}{2}(1-\lambda)\mathbb{E}[\sum_{t=1}^{T}|\rho_{t}-\rho_{t-1}|]$.
Note that $\mathbb{E}[\sum_{t=1}^{T}|\rho_{t}-\rho_{t-1}|]$ is strictly
larger for gradual information than for one-shot resolution. So again,
if $\lambda>1,$ the agent strictly prefers one-shot resolution. If
$0\le\lambda<1,$ the agent strictly prefers gradual information.
If $\lambda=1,$ the agent is indifferent.

(2) If $u$ is linear, then the agent is indifferent between gradual
information and one-shot resolution regardless of the sign of $v(c_{A})-v(c_{B})$.
If $u$ is strictly concave, then for $1\le t\le T-1$, $\mathbb{E}[u(\rho_{t})]<u(\rho_{0})$
by combining the martingale property and Jensen's inequality. So the
agent strictly prefer to keep his prior beliefs until the last period
and will therefore choose one-shot resolution, regardless of the sign
of $v(c_{A})-v(c_{B})$.

(3) \citet*{ely2015suspense} mention a ``state-dependent'' specification
of their suspense and surprise utility functions. With two states,
A and B, their specification uses weights $\alpha_{A},\alpha_{B}>0$
to differentially re-scale belief-based utilities for movements in
the two different directions. Specifically, their re-scaled suspense
utility is 
\[
\sum_{t=0}^{T-1}u\left(\mathbb{E}_{t}\left[\alpha_{A}\cdot(\pi_{t+1}-\pi_{t})^{2}+\alpha_{B}\cdot((1-\pi_{t+1})-(1-\pi_{t}))^{2}\right]\right)
\]
and their re-scaled surprise utility is 
\[
\mathbb{E}\left[\sum_{t=1}^{T}u\left(\alpha_{A}\cdot(\pi_{t+1}-\pi_{t})^{2}+\alpha_{B}\cdot((1-\pi_{t+1})-(1-\pi_{t}))^{2}\right)\right].
\]
We may consider agents with opposite preferences over states A and
B as agents with different pairs of scaling weights $(\alpha_{A},\alpha_{B}).$
Specifically, say there are $\alpha^{\text{High}}>\alpha^{\text{Low}}>0$.
For an agent preferring A, $\alpha_{A}=\alpha^{\text{High}},\alpha_{B}=\alpha^{\text{Low}}$.
For an agent preferring B, $\alpha_{A}=\alpha^{\text{Low}},\alpha_{B}=\alpha^{\text{High}}$.
But note that we always have $\pi_{t+1}-\pi_{t}=-[(1-\pi_{t+1})-(1-\pi_{t})]$,
so along every realized path of beliefs, $(\pi_{t+1}-\pi_{t})^{2}=((1-\pi_{t+1})-(1-\pi_{t}))^{2}$.
This means these two agents with the opposite scaling weights actually
have identical objectives and therefore will have the same preference
over gradual information or one-shot resolution.
\end{proof}

\subsection{Proof of Proposition \ref{prop:unique_babbling}}

We begin by giving some additional definition and notation. For $p,\pi\in[0,1],$
let $N_{B}(x;\pi):=\mu(x-\pi)+\mu(-x)$ denote the total amount of
news utility across two periods when the receiver updates his belief
from $\pi$ to $x>\pi$ today and updates it from $x$ to 0 tomorrow.
Similarly, $N_{A}(p;\pi):=\mu(p-\pi)+\mu(1-p)$.

We state some preliminary lemmas about $N_{A}$ and $N_{B}$, whose
proofs appear in Online Appendix \ref{sec:Secondary_proofs}.
\begin{lem}
\label{lem:no_indiff} If $\mu$ is symmetric around 0 and $\mu^{''}(x)<0$
for all $x>0,$ then for any $0<\pi<x<1$ it holds $N_{B}(0;\pi)<N_{B}(x;\pi)$.
\begin{lem}
\label{lem:lemm2} Suppose $\mu$ exhibits diminishing sensitivity
and greater sensitivity to losses. Then, $p\mapsto N_{A}(p;\pi)$
is strictly increasing on $[0,\pi]$ and symmetric on the interval
$[\pi,1]$. For each $p_{1}\in[\pi,1]$, there exists exactly one
point $p_{2}\in[\pi,1]$ so that $N_{A}(p_{1};\pi)=N_{A}(p_{2};\pi).$
For every $p_{L}<\pi$ and $p_{H}\ge\pi,$ $N_{A}(p_{L};\pi)<N_{A}(p_{H};\pi).$
Also, $N_{B}(p;\pi)$ is symmetric on the interval $[0,\pi]$. For
each $p_{1}\in[0,\pi]$, there exists exactly one point $p_{2}\in[0,\pi]$
so that $N_{B}(p_{1};\pi)=N_{B}(p_{2};\pi).$
\end{lem}
\end{lem}
Consider any period $T-2$ history $h_{T-2}$ in any equilibrium $(M,\sigma^{*},p^{*})$
where $p^{*}(h_{T-2})=\pi\in(0,1).$ Let $P_{A}$ and $P_{B}$ represent
the sets of posterior beliefs induced at the end of $T-1$ with positive
probability, in states A and B. The next lemma gives an exhaustive
enumeration of all possible $P_{A},P_{B}$.
\begin{lem}
\label{lem:enumeration}The sets $P_{A},P_{B}$ belong to one of the
following cases.
\end{lem}
\begin{enumerate}
\item $P_{A}=P_{B}=\{\pi\}$
\item $P_{A}=\{1\},$ $P_{B}=\{0\}$
\item $P_{A}=\{p_{1}\}$ for some $p_{1}\in(\pi,1)$ and $P_{B}=\{0,p_{1}\}$
\item $P_{A}=\{\pi,1\}$ and $P_{B}=\{0,\pi\}$
\item $P_{A}=\{p_{1},p_{2}\}$ for some $p_{1}\in(\pi,\frac{1+\pi}{2}),$
$p_{2}=1-p_{1}+\pi$, $P_{B}=\{0,p_{1},p_{2}\}$.
\end{enumerate}
We now give the proof of Proposition \ref{prop:unique_babbling}.
\begin{proof}
Consider any period $T-2$ history $h^{T-2}$ with $p^{*}(h^{T-2})\in(0,1).$
By Lemma \ref{lem:no_indiff}, $N_{B}(p;p^{*}(h^{T-2}))>N_{B}(0;p^{*}(h^{T-2}))$
for all $p\in(p^{*}(h^{T-2}),1]$. Therefore, cases 3 and 5 are ruled
out from the conclusion of Lemma \ref{lem:enumeration}. This shows
that after having reached history $h^{T-2}$, the receiver will get
total news utility of $\mu(1-p^{*}(h^{T-2}))$ in the good state and
$\mu(-p^{*}(h^{T-2}))$ in the bad state. This conclusion applies
to all period $T-2$ histories (including those with equilibrium beliefs
0 or 1). So, the sender gets the same utility as if the state is perfectly
revealed in period $T-1$ rather than $T$, and the equilibrium up
to period $T-1$ form an equilibrium of the cheap talk game with horizon
$T-1.$ By backwards induction, we see that along the equilibrium
path, whenever the receiver's belief updates, it is updated to the
dogmatic belief in $\theta$.
\end{proof}

\subsection{Proof of Proposition \ref{prop:gradual_family}}
\begin{proof}
Let $J$ intermediate beliefs satisfying the hypotheses be given.
We construct a gradual good news equilibrium where $p_{t}=q^{(t)}$
for $1\le t\le J$, and $p_{t}=q^{(J)}$ for $J+1\le t\le T-1.$

Let $M=\{a,b\}$ and consider the following strategy profile. In period
$t\le J$ where the public history so far $h^{t-1}$ does not contain
any $b$, let $\sigma(h^{t-1};A)(a)=1,$ $\sigma(h^{t-1};B)(a)=x$
where $x\in(0,1)$ satisfies $\frac{p_{t-1}}{p_{t-1}+(1-p_{t-1})x}=p_{t}$.
But if public history contains at least one $b,$ then $\sigma(h^{t-1};A)(b)=1$
and $\sigma(h^{t-1};B)(b)=1$. Finally, if the period is $t>J$, then
$\sigma(h^{t-1};A)(b)=1$ and $\sigma(h^{t-1};B)(b)=1$. In terms
of beliefs, suppose $h^{t}$ has $t\le J$ and every message so far
has been $a.$ Such histories are on-path and get assigned the Bayesian
posterior belief. If $h^{t}$ has $t\le J$ and contains at least
one $b$, then it gets assigned belief 0. Finally, if $h^{t}$ has
$t>J$, then $h^{t}$ gets assigned the same belief as the subhistory
constructed from its first $J$ elements. It is easy to verify that
these beliefs are derived from Bayes' rule whenever possible.

We verify that the sender has no incentive to deviate. Consider period
$t\le J$ with history $h^{t-1}$ that does not contain any $b.$
The receiver's current belief is $p_{t-1}$ by construction.

In state $B$, we first calculate the sender's equilibrium payoff
after sending $a.$ The receiver will get some $I$ periods of good
news before the bad state is revealed, either by the sender or by
nature in period $T.$ That is, the equilibrium news utility with
$I$ periods of good news is given by $\sum_{i=1}^{I}\mu(p_{t-1+i}-p_{t-2+i})+\mu(-p_{t-1+I}).$
Since $p_{t-1+I}\in P^{*}(p_{t-2+I})$, we have $N_{B}(p_{t-1+I};p_{t-2+I})=N_{B}(0;p_{t-2+I}),$
that is to say $\mu(p_{t-1+I}-p_{t-2+I})+\mu(-p_{t-1+I})=\mu(-p_{t-2+I}).$
We may therefore rewrite the receiver's total news utility as $\sum_{i=1}^{I-1}\mu(p_{t-1+i}-p_{t-2+i})+\mu(-p_{t-2+I})$.
But by repeating this argument, we conclude that the receiver's total
news utility is just $\mu(-p_{t-1})$. Since this result holds regardless
of $I$'s realization, the sender's expected total utility from sending
$g$ today is $\mu(-p_{t-1})$, which is the same as the news utility
from sending $b$ today. Thus, sender is indifferent between $a$
and $b$ and has no profitable deviation.

In state $A$, the sender gets at least $\mu(1-p_{t-1})$ from following
the equilibrium strategy. This is because the receiver's total news
utility in the good state along the equilibrium path is given by $\sum_{i=1}^{J-(t-1)}\mu(p_{t-1+i}-p_{t-2+i})+\mu(1-p_{t-1+I})$.
By sub-additivity in gains, this sum is strictly larger than $\mu(1-p_{t-1}).$
If the sender deviates to sending $b$ today, then the receiver updates
belief to 0 today and belief remains there until the exogenous revelation,
when belief updates to 1. So this deviation gives the total news utility
$\mu(-p_{t-1})+\mu(1)$. We have 
\begin{align*}
\mu(1) & <\mu(1-p_{t-1})+\mu(p_{t-1})\\
 & \le\mu(1-p_{t-1})-\mu(-p_{t-1}),
\end{align*}
where the first inequality comes from sub-additivity in gains, and
the second from weak loss aversion. This shows $\mu(-p_{t-1})+\mu(1)<\mu(1-p_{t-1})$,
so the deviation is strictly worse than sending the equilibrium message.

Finally, at a history containing at least one $b$ or a history with
length $K$ or longer, the receiver's belief is the same at all continuation
histories. So the sender has no deviation incentives since no deviations
affect future beliefs.

For the other direction, suppose by way of contradiction there exists
a gradual good news equilibrium with the $J$ intermediate beliefs
$q^{(1)}<...<q^{(J)}$. For a given $1\le j\le J,$ find the smallest
$t$ such that $p_{t}=q^{(k-1)}$ and $p_{t+1}=q^{(k)}.$ At every
on-path history $h^{t}\in H^{t}$ with $p^{*}(h^{t})=p_{t}$, we must
have $\sigma^{*}(h^{t};B)$ inducing both 0 and $q^{(j)}$ with strictly
positive probability. Since we are in equilibrium, we must have $\mu(-q^{(j-1)})$
being equal to $\mu(q^{(j)}-q^{(j-1)})$ plus the continuation payoff.
If $j=J$, then this continuation payoff is $\mu(-q^{(j)})$ as the
only other period of belief movement is in period $T$ when the receiver
learns the state is bad. If $j<J,$ then find the smallest $\bar{t}$
so that $p_{\bar{t}+1}=q^{(j+1)}$. At any on-path $h^{\bar{t}}\in H^{\bar{t}}$
which is a continuation of $h^{t},$ we have $p^{*}(h^{\bar{t}})=q^{(j)}$
and the receiver has not experienced any news utility in periods $t+2,...,\bar{t}$.
Also, $\sigma^{*}(h^{\bar{t}};B)$ assigns positive probability to
inducing posterior belief 0, so the continuation payoff in question
must be $\mu(-q^{(j)}).$ So we have shown that $\mu(-q^{(j-1)})=\mu(q^{(j)}-q^{(j-1)})+\mu(-q^{(j)}),$
that is $N_{B}(q^{(j)};q^{(j-1)})=N_{B}(0;q^{(j-1)})$.
\end{proof}

\subsection{Proof of Corollary \ref{cor:P_star_set_quadratic}}
\begin{proof}
We apply Proposition \ref{prop:gradual_family} to the case of quadratic
news utility. Recall the relevant indifference equation in the good
state.
\begin{equation}
\mu(-q_{t})=\mu(q_{t+1}-q_{t})+\mu(-q_{t+1}).\label{eq:indiff}
\end{equation}
Plugging in the quadratic specification and algebraic transformations
lead to
\[
0=(\alpha_{p}-\alpha_{n})(q_{t+1}-q_{t})-\beta_{p}(q_{t+1}-q_{t})+\beta_{n}(q_{t+1}-q_{t})(q_{t+1}+q_{t})
\]
 Define $r=q_{t+1}-q_{t}$. Then this relation can be written as
\[
(\beta_{p}-\beta_{n})r^{2}+(\alpha_{n}-\alpha_{p}-2\beta_{n}q_{t})r=0,
\]
i.e. $r$ is a zero of a second order polynomial. For $P^{*}$ to
be non-empty we need this root $r$ to be in $(0,1-q_{t})$. In particular
the peak/trough $\bar{r}$ of the parabola defined by the second order
polynomial should satisfy $\bar{r}\in(0,\frac{1-q_{t}}{2})$. Given
that $\bar{r}=\frac{2\beta_{n}q_{t}-(\alpha_{n}-\alpha_{p})}{2(\beta_{p}-\beta_{n})}$
for the case that $\beta_{p}\neq\beta_{n}$, we get the equivalent
condition on the primitives $0<\frac{2\beta_{n}q_{t}-(\alpha_{n}-\alpha_{p})}{2(\beta_{p}-\beta_{n})}<\frac{1-q_{t}}{2}.$
The root $r$ itself is given by $r=\frac{2\beta_{n}q_{t}-(\alpha_{n}-\alpha_{p})}{\beta_{p}-\beta_{n}},$
which leads to the recursion 
\begin{equation}
q_{t+1}=q_{t}\frac{\beta_{p}+\beta_{n}}{\beta_{p}-\beta_{n}}-\frac{\alpha_{n}-\alpha_{p}}{\beta_{p}-\beta_{n}}.\label{eq:recursion}
\end{equation}
This leads to the formula for $P^{*}(\pi)$ in part 1).

\textbf{Case 1:} When $\beta_{p}<\beta_{n}$ the coefficient in front
of $q_{t}$ is negative so that the recursion in Equation (\ref{eq:recursion})
leads to 
\[
q_{t+1}-q_{t}=q_{t}\frac{2\beta_{n}}{\beta_{p}-\beta_{n}}-\frac{\alpha_{n}-\alpha_{p}}{\beta_{p}-\beta_{n}}<0.
\]
This also shows that for the case that $\beta_{p}<\beta_{n}$, a GGN
equilibrium with 1 or more intermediate beliefs only exists when the
prior is low enough: namely $\pi_{0}<\frac{\alpha_{n}-\alpha_{p}}{2\beta_{n}}=:q^{*}.$

\textbf{Case 2:} When $\beta_{p}>\beta_{n}$ the slope in Equation
(\ref{eq:recursion}) is above $1$ so that for all priors $\pi_{0}$
large enough we get an increasing sequence $q_{t}$ which satisfies
Equation (\ref{eq:indiff}). It is also easy to see from Equation
(\ref{eq:recursion}) that
\[
(q_{t+2}-q_{t+1})-(q_{t+1}-q_{t})=\left(\frac{\beta_{p}+\beta_{n}}{\beta_{p}-\beta_{n}}-1\right)>0,
\]
proving the statement in the text after the corollary.

That an equilibrium can exist where partial good news are released
for more than two periods, is shown by the example in the main text
following the statement of the Corollary (see Figure \ref{fig:GGN_example}).
\end{proof}

\subsection{Proof of Proposition \ref{prop:ggn_increasing_rate}}
\begin{proof}
Since $N_{B}(p;\pi)-N_{B}(0;\pi)=0$ for $p=\pi$ and $\frac{\partial}{\partial p}N_{B}(p;\pi)|_{p=\pi}>0,$
$N_{B}(p;\pi)-N_{B}(0;\pi)$ starts off positive for $p$ slightly
above $\pi.$ Given that $|P^{*}(\pi)|\le1,$ if we find some $p^{'}>\pi$
with $N_{B}(p^{'};\pi)-N_{B}(0;\pi)>0,$ then any solution to $N_{B}(p;\pi)-N_{B}(0;\pi)=0$
in $(\pi,0)$ must lie to the right of $p^{'}.$

If $q^{(j)},q^{(j+1)}$ are intermediate beliefs in a GGN equilibrium,
then by Proposition \ref{prop:gradual_family}, $q^{(j)}\in P^{*}(q^{(j-1)})$
and $q^{(j+1)}\in P^{*}(q^{(j)})$. Let $p^{'}=q^{(j)}+(q^{(j)}-q^{(j-1)})$.
Then, 
\begin{align*}
N_{B}(p^{'};q^{(j)})-N_{B}(0;q^{(j)}) & =\mu(p^{'}-q^{(j)})+\mu(-p^{'})-\mu(-q^{(j)})\\
 & =\mu(q^{(j)}-q^{(j-1)})+\mu(-q^{(j)}-(q^{(j)}-q^{(j-1)}))-\mu(-q^{(j)})\\
 & >\mu(q^{(j)}-q^{(j-1)})+\mu(-q^{(j-1)}-(q^{(j)}-q^{(j-1)}))-\mu(-q^{(j-1)}),
\end{align*}
where the last inequality comes from diminishing sensitivity. But,
the final expression is $N_{B}(q^{(j)};q^{(j-1)})-N_{B}(0;q^{(j-1)})$,
which is 0 since $q^{(j)}\in P^{*}(q^{(j-1)})$. This shows we must
have $q^{(j+1)}-q^{(j)}>q^{(j)}-q^{(j-1)}.$
\end{proof}

\section{\label{sec:Optimal-Information-Structure}More Results on Preference
over Information Structures}

In this section, we consider an agent who commits to an information
structure at time 0. We find a sufficient condition on the degrees
of loss aversion and diminishing sensitivity for one-shot resolution
to not be optimal across all information structures. When there are
two periods, we find a sufficient condition for all optimal information
structures to satisfy gradual good news, one-shot bad news. Finally,
for two periods and the quadratic news-utility function, we solve
for the optimal information structure in closed form.

\subsection{General Notation for the Information Structure}

In period 0, the agent chooses an information structure $(M,\sigma)$
to learn about the state $\theta$ over time. An information structure
consists of a finite message space $M$ and a family of state-contingent
message distributions $\sigma=(\sigma_{t})_{t=1}^{T-1}$, where $\sigma_{t}(\cdot\mid h^{t-1},\theta)\in\Delta(M)$
is a distribution over messages in period $t$ that depends on the
history $h^{t-1}\in H^{t-1}:=(M)^{t-1}$ of messages so far, as well
as the true state $\theta$.

The agent commits to how he will learn about $\theta$ once he chooses
an information structure. He will mechanically receive messages in
periods $1,2,...,T-1$ according to $\sigma$. At the end of period
$t$ for $1\le t\le T-1$, the agent forms the Bayesian posterior
belief $\pi_{t}$ about the state based on the history $h^{t}\in H^{t}$
of $t$ messages. We normalize $v(c_{A})=1,v(c_{B})=0$, so the agent's
news utility in period $1\le t\le T$ is $\mu(\pi_{t}-\pi_{t-1}).$

\subsection{\label{subsec:suboptimal_oneshot}Sub-Optimality of One-Shot Resolution
and Optimality of Gradual Good News, One-Shot Bad News}

First, we give a sufficient condition on the news-utility function
for one-shot resolution to be strictly suboptimal.
\begin{prop}
\label{prop:subopt_one_shot}For any $T$, one-shot resolution is
strictly suboptimal if 
\[
\mu(1-\pi_{0})-\mu(-\pi_{0})+\mu^{'}(0^{+})-\mu^{'}(1-\pi_{0})+\mu(-1)>0.
\]
\end{prop}
We will soon explain that the condition in Proposition \ref{prop:subopt_one_shot}
has the interpretation of ``strong enough diminishing sensitivity
relative to loss aversion.'' The intuition behind the result, assuming
this interpretation of its condition, is that an information structure
that makes the agent either 99\% certain that the state is good or
fully confident that the state is bad improves on one-shot resolution.
Starting with an interim belief of 0.99, the agent will receive another
small piece of good news with high probability in the future, and
that future news will still deliver sizable positive news utility
by diminishing sensitivity. Of course, the agent will also become
disappointed on rare occasions, but the relative weakness of loss
aversion limits the disutility of this event. (It is easy to show
that if $\mu$ is instead two-part linear, then one-shot resolution
is optimal.)

Now we explain why the condition in this result can be thought of
as a race between loss aversion and diminishing sensitivity. The quadratic
news utility provides a clear illustration of this interpretation:
the condition holds if and only if there is enough curvature relative
to the size of the ``kink'' at 0.
\begin{cor}
\label{cor:quadratic}For quadratic news utility, Proposition \ref{prop:subopt_one_shot}'s
condition is equivalent to $\alpha_{n}-\alpha_{p}<\beta_{n}+\beta_{p}$.
\end{cor}
On the left-hand side, $\alpha_{n}-\alpha_{p}=\mu^{'}(0^{-})-\mu^{'}(0^{+})$,
which is the amount of loss aversion in quadratic news utility near
0. On the right-hand side, $\beta_{p}$ and $\beta_{n}$ control the
amounts of curvature in the positive and negative regions, respectively,
and correspond to the extent of diminishing sensitivity.

The interpretation of ``strong enough diminishing sensitivity relative
to loss aversion'' also extends to a general $\mu.$ We have $\mu(1-\pi_{0})-\mu(-\pi_{0})>0$,
so Proposition \ref{prop:subopt_one_shot}'s condition is satisfied
whenever $\mu^{'}(0^{+})-\mu^{'}(1-\pi_{0})+\mu(-1)>0$. We always
have $\mu^{'}(0^{+})-\mu^{'}(1-\pi_{0})>0,$ and it increases when
$\mu$ becomes more concave in the positive region. We have $\mu(-1)<0,$
but it increases when $\mu$ is more convex in the negative region.
So diminishing sensitivity, in the gains or losses domain, increase
the expression $\mu^{'}(0^{+})-\mu^{'}(1-\pi_{0})+\mu(-1)$. On the
other hand, holding fixed $\mu^{'}(0^{+})$ and the curvature $\mu^{''}(x)$
for $x\ne0$, increasing the amount of loss aversion near 0 (i.e.,
$\mu^{'}(0^{-})-\mu^{'}(0^{+}))$ decreases $\mu(-1)$.

 The next result presents a necessary and sufficient condition for
inconclusive bad news to be suboptimal when $T=2.$ We then verify
the condition for quadratic news utility. Let $U_{1}(\pi_{1}\mid\pi_{0})$
be the sum of the agent's expected news utilities in periods 1 and
2, if he updates his belief to $\pi_{1}\in[0,1]$ at the end of period
1. So $U_{1}(\pi_{1}\mid\pi_{0})=\mu(\pi_{1}-\pi_{0})+\pi_{1}\cdot\mu(1-\pi_{1})+(1-\pi_{1})\cdot\mu(-\pi_{1})$.
\begin{prop}
\label{prop:chord}For $T=2$, information structures with $\mathbb{P}_{(M,\sigma)}[\pi_{1}<\pi_{0}\text{ and }\pi_{1}\ne0]>0$
are strictly suboptimal if and only if there exists some $q\ge\pi_{0}$
so that the chord connecting $(0,U_{1}(0\mid\pi_{0}))$ and $(q,U_{1}(q\mid\pi_{0}))$
lies strictly above $U_{1}(p\mid\pi_{0})$ for all $p\in(0,\pi_{0}).$
\end{prop}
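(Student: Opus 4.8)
The plan is to use the reformulation behind Proposition~\ref{prop:backwards_induction}: when $T=2$ an information structure is simply a mean-$\pi_0$ distribution over the posterior $\pi_1\in[0,1]$, its expected payoff is $\mathbb{E}[U_1(\pi_1\mid\pi_0)]$ with $U_1(p\mid\pi_0)=\mu(p-\pi_0)+p\,\mu(1-p)+(1-p)\,\mu(-p)$, and the maximal attainable payoff is $U_1^{*}(\pi_0)=\big(\text{cav}\,U_1(\cdot\mid\pi_0)\big)(\pi_0)$. I record two standard facts. First, from $\mu(0)=0$ one computes $U_1(0\mid\pi_0)=\mu(-\pi_0)$, $U_1(1\mid\pi_0)=\mu(1-\pi_0)$ and $U_1(\pi_0\mid\pi_0)=\pi_0\mu(1-\pi_0)+(1-\pi_0)\mu(-\pi_0)$, so the three points $(0,U_1(0\mid\pi_0))$, $(\pi_0,U_1(\pi_0\mid\pi_0))$, $(1,U_1(1\mid\pi_0))$ are collinear; call this line $\ell^{\circ}$, and for $q\in[\pi_0,1]$ write $\ell_q$ for the chord joining $(0,U_1(0\mid\pi_0))$ and $(q,U_1(q\mid\pi_0))$, so $\ell_{\pi_0}$ coincides with $\ell^{\circ}$ on $[0,\pi_0]$. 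Second, a mean-$\pi_0$ distribution is optimal iff it is supported on the contact set $\{p:U_1(p\mid\pi_0)=\text{cav}\,U_1(p\mid\pi_0)\}$ and $\text{cav}\,U_1(\cdot\mid\pi_0)$ is affine on the convex hull of that support (the equality case of Jensen's inequality).

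For the ``if'' direction, fix $q\ge\pi_0$ with $\ell_q>U_1(\cdot\mid\pi_0)$ on $(0,\pi_0)$, and let $(M,\sigma)$ be any structure with $\mathbb{P}[\pi_1\in(0,\pi_0)]>0$. Modify it by re-splitting: every realization $\pi_1=p\in(0,\pi_0)$ is replaced by the lottery yielding posterior $0$ with probability $\tfrac{q-p}{q}$ and posterior $q$ with probability $\tfrac{p}{q}$ (feasible since $0<p<\pi_0\le q$, and conditional-mean-preserving, hence still Bayes-plausible). The resulting structure has expected payoff larger by exactly $\mathbb{E}\big[\mathbf{1}\{\pi_1\in(0,\pi_0)\}\big(\ell_q(\pi_1)-U_1(\pi_1\mid\pi_0)\big)\big]>0$, so $(M,\sigma)$ is strictly suboptimal.

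For the ``only if'' direction I prove the contrapositive: suppose no optimal structure places positive probability on $(0,\pi_0)$. By Proposition~\ref{prop:backwards_induction} there is an optimal structure with at most two posteriors $a\le\pi_0\le b$; by hypothesis $a\notin(0,\pi_0)$, so $a\in\{0,\pi_0\}$, and hence $U_1^{*}(\pi_0)=\max_{b\in[\pi_0,1]}\ell_b(\pi_0)$ (the value $b=\pi_0$ corresponding to the degenerate one-shot structure, with value $U_1(\pi_0\mid\pi_0)=\ell^{\circ}(\pi_0)$). Fix a maximizer $q$. If $q>\pi_0$, optimality of the two-point structure $\{0,q\}$ forces $\text{cav}\,U_1(\cdot\mid\pi_0)=\ell_q$ on $[0,q]$, so $\ell_q\ge U_1(\cdot\mid\pi_0)$ on $[0,\pi_0]$; if it were not strict, say $\ell_q(p^{*})=U_1(p^{*}\mid\pi_0)$ at some $p^{*}\in(0,\pi_0)$, then $\text{cav}\,U_1(\cdot\mid\pi_0)$ is affine on $[p^{*},q]\ni\pi_0$ and agrees with $U_1(\cdot\mid\pi_0)$ at both endpoints, so mixing posteriors $p^{*}$ and $q$ to mean $\pi_0$ gives an optimal structure with inclusive bad news, a contradiction; hence $\ell_q>U_1(\cdot\mid\pi_0)$ on $(0,\pi_0)$. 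If instead every maximizer equals $\pi_0$, suppose for contradiction $\ell^{\circ}(p^{*})\le U_1(p^{*}\mid\pi_0)$ at some $p^{*}\in(0,\pi_0)$; since $(1,U_1(1\mid\pi_0))$ lies on $\ell^{\circ}$, mixing posteriors $p^{*}$ and $1$ to mean $\pi_0$ yields value at least $\ell^{\circ}(\pi_0)=U_1^{*}(\pi_0)$, i.e. an optimal structure with inclusive bad news, again a contradiction; hence $\ell^{\circ}>U_1(\cdot\mid\pi_0)$ on $(0,\pi_0)$ and $q=\pi_0$ witnesses the condition.

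I expect the delicate step to be exactly that last case $q=\pi_0$, where the degenerate one-shot structure is the (essentially unique) optimum: the only thing forcing the separating chord $\ell^{\circ}$ to stay strictly above $U_1$ on $(0,\pi_0)$ is the collinearity of the three ``corner'' values of $U_1(\cdot\mid\pi_0)$, which lets one trade a would-be interior touch point against the posterior $1$. The remaining points are routine: each re-split object is a genuine (finitely supported, Bayes-plausible) information structure, and under the contrapositive hypothesis the two-signal optimum supplied by Proposition~\ref{prop:backwards_induction} may be taken with lower posterior in $\{0,\pi_0\}$.
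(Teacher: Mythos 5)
Your proof is correct, and it follows the same basic route as the paper's (reduction to two-point posterior distributions via Proposition \ref{prop:backwards_induction}, mean-preserving splits, and chord geometry), but your execution is tighter in one substantive place. In the ``if'' direction the paper asserts that if the chord condition holds for some $q\ge\pi_{0}$ then ``in particular it holds for $q=\pi_{0}$,'' and then splits each interior posterior onto $\{0,\pi_{0}\}$; that reduction does not follow from the stated hypothesis. Writing $\ell_{q}$ for the chord through $(0,U_{1}(0\mid\pi_{0}))$ and $(q,U_{1}(q\mid\pi_{0}))$, continuity of $U_{1}(\cdot\mid\pi_{0})$ gives $\ell_{q}(\pi_{0})\ge U_{1}(\pi_{0}\mid\pi_{0})=\ell_{\pi_{0}}(\pi_{0})$, so $\ell_{q}$ lies weakly \emph{above} $\ell_{\pi_{0}}$ on $[0,\pi_{0}]$, and domination of $U_{1}$ by $\ell_{q}$ need not transfer to $\ell_{\pi_{0}}$. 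Your split of each realized $p\in(0,\pi_{0})$ onto $\{0,q\}$ for the witnessing $q$ is exactly what is needed to prove the stated ``if'' direction. On the converse, both arguments take an optimal structure supported on $\{0,q\}$ and derive a contradiction from an interior point where $U_{1}$ touches the chord; your additional ingredient is the collinearity of $(0,U_{1}(0\mid\pi_{0}))$, $(\pi_{0},U_{1}(\pi_{0}\mid\pi_{0}))$, and $(1,U_{1}(1\mid\pi_{0}))$, which handles the degenerate (no-information-optimal) case cleanly; the paper makes the same point by noting that one-shot resolution is then also optimal. Two cosmetic remarks: what you call the ``contrapositive'' is just the direct implication, since strict suboptimality of a structure is the same as its failing to attain the optimum; and your second case (``every maximizer equals $\pi_{0}$'') is in fact vacuous by your own collinearity observation, since $b=1$ always attains $\ell_{1}(\pi_{0})=\ell_{\pi_{0}}(\pi_{0})$ --- though the argument you give there is valid regardless.
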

\begin{cor}
\label{cor:chord_quadratic}Quadratic news utility satisfies the condition
of Proposition \ref{prop:chord}.
\end{cor}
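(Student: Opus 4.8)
The plan is to verify the condition of Proposition~\ref{prop:chord} by exhibiting an explicit chord --- the one with $q=\pi_0$ --- and reducing ``the chord lies above the function'' to a monotonicity property of a secant-slope function.

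First I would write $U_1(\cdot\mid\pi_0)$ out in the binary-state case. Since $v(c_G)=1$ and $v(c_B)=0$, the posterior mean belief equals $\pi_1$, so for $p\in[0,1]$,
\[
U_1(p\mid\pi_0)=\mu(p-\pi_0)+p\,\mu(1-p)+(1-p)\,\mu(-p).
\]
For $p\in[0,\pi_0]$ the three arguments stay in regions where $\mu$ is a single fixed quadratic ($p-\pi_0\le 0$, $1-p\ge 0$, $-p\le 0$), so substituting the quadratic formula and collecting powers of $p$ shows that $p\mapsto U_1(p\mid\pi_0)$ is a cubic on $[0,\pi_0]$ with leading coefficient $-(\beta_p+\beta_n)$ and with $U_1(0\mid\pi_0)=\mu(-\pi_0)$. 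Expanding this cubic carefully is the only place any bookkeeping is needed.

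Second, I would observe that the chord joining $(0,U_1(0\mid\pi_0))$ and $(q,U_1(q\mid\pi_0))$ is the affine map $p\mapsto U_1(0\mid\pi_0)+p\,h(q)$, where $h(p):=\big(U_1(p\mid\pi_0)-U_1(0\mid\pi_0)\big)/p$ is the slope of the secant from $0$ to $p$; hence the chord lies strictly above $U_1(\cdot\mid\pi_0)$ on $(0,\pi_0)$ if and only if $h(q)>h(p)$ for all $p\in(0,\pi_0)$. Taking $q=\pi_0$, it therefore suffices to show that $h$ is strictly increasing on $(0,\pi_0]$. From the cubic formula, $h$ is a downward-opening parabola in $p$, and a direct computation gives
\[
h'(p)=\big(2\beta_n+2\beta_p+\alpha_n-\alpha_p\big)-2(\beta_p+\beta_n)\,p.
\]
Since $h'$ is decreasing and $h'(1)=\alpha_n-\alpha_p\ge 0$ --- the loss-aversion restriction $\alpha_n-\alpha_p\ge(\beta_n-\beta_p)z$ evaluated at $z=0$ --- we get $h'(p)>0$ for every $p<1$; as $\pi_0<1$, $h$ is strictly increasing on $[0,\pi_0]$, so $h(\pi_0)>h(p)$ for all $p\in(0,\pi_0)$. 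This is exactly the condition of Proposition~\ref{prop:chord} with $q=\pi_0$.

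The main thing to get right is the conceptual reduction --- choosing $q=\pi_0$ and recasting the statement as strict monotonicity of the secant slope from $0$ --- after which the algebra is mechanical. The only role the hypotheses on $\mu$ play is that the vertex of the parabola $h$ sits at $p^{\ast}=1+\tfrac{\alpha_n-\alpha_p}{2(\beta_p+\beta_n)}\ge 1$, which holds precisely because of weak loss aversion; had $\alpha_n<\alpha_p$ been allowed, $p^{\ast}$ could fall inside $(0,\pi_0)$ and the choice $q=\pi_0$ would no longer work (though a larger $q$ might).
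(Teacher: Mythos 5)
Your proof is correct, and it verifies the same chord as the paper (the one with $q=\pi_0$), but the reduction you use is genuinely different from the paper's. The paper first proves an auxiliary lemma (Lemma A.1 in the Appendix): writing $D(x)=\ell(x)-U_1(x\mid\pi_0)$ for the gap between the chord and the objective, it suffices that $D'(0)>0$, $D'(\pi_0)<0$, and $D''$ has at most one zero on $(0,\pi_0)$; it then verifies these three facts by direct computation, invoking weak loss aversion at $z=1$ for the sign of $D'(0)$ and at $z=0$ for the sign of $D'(\pi_0)$. You instead recast ``the chord from $0$ to $q$ lies strictly above the graph at $p$'' as $h(q)>h(p)$ for the secant-slope function $h(p)=\bigl(U_1(p\mid\pi_0)-U_1(0\mid\pi_0)\bigr)/p$, and then show $h$ is strictly increasing on $[0,\pi_0]$ because it is a downward parabola whose vertex sits at $1+\tfrac{\alpha_n-\alpha_p}{2(\beta_p+\beta_n)}\ge 1$ (using loss aversion only at $z=0$). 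I checked your expansion: on $[0,\pi_0]$ the objective is a cubic with $p^3$-coefficient $-(\beta_p+\beta_n)$ and $p^2$-coefficient $2\beta_n+2\beta_p+\alpha_n-\alpha_p$, so $h'(p)=\bigl(2\beta_n+2\beta_p+\alpha_n-\alpha_p\bigr)-2(\beta_p+\beta_n)p$ and $h'(1)=\alpha_n-\alpha_p\ge0$ as you say; since $h'$ is strictly decreasing, $h'>0$ on $[0,1)\supseteq[0,\pi_0]$. Your route dispenses with the auxiliary lemma entirely and proves the slightly stronger statement that the secant slope from $0$ is monotone (equivalently, \emph{every} chord from $0$ to a point of $(0,\pi_0]$ dominates the function to its left), at the cost of being tied to the polynomial structure; the paper's endpoint-derivative-plus-inflection-count lemma is a touch more portable to non-polynomial $\mu$. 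One small caveat worth making explicit if you write this up: the cubic formula, and hence the evaluation $h'(1)=\alpha_n-\alpha_p$, uses the polynomial extension of $U_1$ beyond $[0,\pi_0]$ (for $p>\pi_0$ the true $U_1$ switches to the gain branch of $\mu$); this is harmless since you only use it as an algebraic device to sign the linear function $h'$ on $[0,\pi_0]$, but it should be flagged.
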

If $\mu$ satisfies both conditions in Propositions \ref{prop:subopt_one_shot}
and \ref{prop:chord}, then any optimal information structure for
the agent with $T=2$ must feature strictly gradual good news, one-shot
bad news. In particular, we can combine Corollaries \ref{cor:quadratic}
and \ref{cor:chord_quadratic} to infer that this conclusion applies
to quadratic news utility satisfying $\alpha_{n}-\alpha_{p}<\beta_{n}+\beta_{p}$.

\subsection{\label{subsec:Illustrative-Example:-Quadratic}Explicit Solution
with Quadratic News Utility}

We solve the optimal information structure in closed-form when the
agent has a quadratic news-utility function. Suppose the parameters
of $\mu$ satisfy $\alpha_{n}-\alpha_{p}<\beta_{n}+\beta_{p}$ in
a $T=2$ environment. Furthermore, a concavification argument (Proposition
\ref{prop:backwards_induction}) shows that there exists an optimal
information structure with binary messages that induces either belief
0 or belief $p_{H}>\pi_{0}$ in the only period of communication.
We characterizes $p_{H}$ as the root of a cubic polynomial.
\begin{prop}
\label{prop:quadratic_explicit_concav}For $T=2$ and quadratic news
utility satisfying $\alpha_{n}-\alpha_{p}<\beta_{n}+\beta_{p},$ the
optimal partial good news $p_{H}>\pi_{0}$ satisfies 
\[
\pi_{0}(\alpha_{n}-\alpha_{p})-(\beta_{p}+\beta_{n})\pi_{0}^{2}=p_{H}^{2}(\alpha_{n}-\alpha_{p}+\beta_{n}+\beta_{p})-p_{H}^{3}(2\beta_{p}+2\beta_{n}).
\]
Let $c:=\frac{\alpha_{n}-\alpha_{p}}{\beta_{n}+\beta_{p}}$. We have
$\frac{dp_{H}}{dc}>0.$ Also, we have $\frac{dp_{H}}{d\pi_{0}}<0$
when $\pi_{0}<\frac{1}{2}c$, and $\frac{dp_{H}}{d\pi_{0}}>0$ when
$\pi_{0}>\frac{1}{2}c$.
\end{prop}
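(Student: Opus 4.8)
The plan is to make the tangency condition from the discussion preceding the proposition explicit under quadratic $\mu$, read off the cubic, and then run the implicit function theorem for the two comparative statics. First I would record the one-period-ahead objective: with $K=2$ and $T=2$ we have $v_t=\pi_t$, so for $p\ge\pi_0$,
\[
U_1(p\mid\pi_0)=\mu(p-\pi_0)+p\,\mu(1-p)+(1-p)\,\mu(-p),
\]
which becomes a cubic polynomial in $p$ once the quadratic formula for $\mu$ is substituted, while on the other branch $U_1(0\mid\pi_0)=\mu(-\pi_0)=-\alpha_n\pi_0+\beta_n\pi_0^2$. By Corollaries \ref{cor:quadratic} and \ref{cor:chord_quadratic}, Proposition \ref{prop:chord}, and Proposition \ref{prop:backwards_induction} together with the surrounding discussion, there is an optimal information structure inducing $\pi_1\in\{0,p_H\}$ with $\pi_0<p_H<1$, and $\text{cav}\,U_1(\cdot\mid\pi_0)$ coincides on $[0,p_H]$ with the chord through $(0,U_1(0\mid\pi_0))$ and $(p_H,U_1(p_H\mid\pi_0))$, lying strictly above $U_1(\cdot\mid\pi_0)$ on $(0,p_H)$. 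Tangency of this chord at $p_H$ yields $\frac{\partial}{\partial p}U_1(p\mid\pi_0)\big|_{p=p_H}=\big(U_1(p_H\mid\pi_0)-U_1(0\mid\pi_0)\big)/p_H$. Substituting the quadratic $\mu$, expanding, and observing that the coefficient of $p_H^1$ cancels on the two sides, one is left with exactly the displayed cubic — routine algebra I would not spell out.

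For the comparative statics, write $B:=\beta_n+\beta_p>0$ and $c:=(\alpha_n-\alpha_p)/B$; weak loss aversion gives $c\ge0$ and the hypothesis $\alpha_n-\alpha_p<\beta_n+\beta_p$ gives $c<1$. Dividing the cubic by $B$ turns it into $F(p,c,\pi_0):=(c+1)p^2-2p^3-c\pi_0+\pi_0^2=0$, so that $F_p=2p(c+1-3p)$, $F_c=p^2-\pi_0$, $F_{\pi_0}=2\pi_0-c$, and by the implicit function theorem $\frac{dp_H}{dc}=-F_c/F_p$ and $\frac{dp_H}{d\pi_0}=-F_{\pi_0}/F_p$. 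The crux is to show $p_H>(c+1)/3$, so that $F_p(p_H)=2p_H(c+1-3p_H)<0$. Here the concavification geometry does the work: setting $\psi:=(\text{chord})-U_1(\cdot\mid\pi_0)\ge0$ on $[0,p_H]$, we have $\psi(0)=\psi(p_H)=0$ and, by tangency, $\psi'(p_H)=0$; since $U_1(\cdot\mid\pi_0)$ is a genuine cubic on $(\pi_0,1)$ with third derivative $-6B\ne0$, and $p_H$ lies strictly in that range, the Taylor expansion of $\psi$ at $p_H$ forces $\psi''(p_H)>0$, i.e. $\frac{\partial^2}{\partial p^2}U_1(p_H\mid\pi_0)<0$; and $\frac{\partial^2}{\partial p^2}U_1(p\mid\pi_0)=2(\alpha_n-\alpha_p)+2B(1-3p)=2B(c+1-3p)$, which is negative precisely when $p>(c+1)/3$.

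With $F_p(p_H)<0$ in hand, $\frac{dp_H}{d\pi_0}$ has the sign of $F_{\pi_0}=2\pi_0-c$, which is the claimed pattern ($<0$ for $\pi_0<\tfrac12 c$, $>0$ for $\pi_0>\tfrac12 c$). For $\frac{dp_H}{dc}$ it remains to sign $F_c=p_H^2-\pi_0$; I would show $p_H>\sqrt{\pi_0}$. Writing $g(p):=(c+1)p^2-2p^3$, the cubic reads $g(p_H)=c\pi_0-\pi_0^2$, and a direct computation gives $g(\sqrt{\pi_0})-g(p_H)=\pi_0(1-\sqrt{\pi_0})^2>0$ since $\pi_0\in(0,1)$. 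Because $g$ is strictly decreasing on $((c+1)/3,\infty)$ and $p_H$ lies there, either $\sqrt{\pi_0}\le(c+1)/3<p_H$ outright, or $\sqrt{\pi_0}>(c+1)/3$ and then $g(\sqrt{\pi_0})>g(p_H)$ on the decreasing branch forces $\sqrt{\pi_0}<p_H$. Either way $F_c>0$, so $\frac{dp_H}{dc}=-F_c/F_p>0$.

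The main obstacle is the step locating $p_H$ on the decreasing branch of the cubic, $p_H>(c+1)/3$ — equivalently, the fact that concavification selects the correct root — and the closely related sign of $p_H^2-\pi_0$; everything else is bookkeeping. The one subtlety to watch throughout is that $U_1(\cdot\mid\pi_0)$ has a kink at $\pi_0$, so the cubic-polynomial facts (the third derivative, the formula for $\frac{\partial^2}{\partial p^2}U_1$) must be invoked only on $(\pi_0,1)$, which is legitimate precisely because $\pi_0<p_H<1$.
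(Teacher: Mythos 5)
Your proposal is correct, and its skeleton matches the paper's proof exactly: the same tangency condition $p_H\,\frac{\partial}{\partial p}U_1(p\mid\pi_0)\big|_{p_H}=U_1(p_H\mid\pi_0)-U_1(0\mid\pi_0)$ yielding the displayed cubic, the same implicit-function-theorem setup with $F(p,c,\pi_0)=(c+1)p^2-2p^3-c\pi_0+\pi_0^2$, and essentially the same $g(\sqrt{\pi_0})-g(p_H)=\pi_0(1-\sqrt{\pi_0})^2>0$ computation to get $p_H>\sqrt{\pi_0}$ (your explicit two-case split on whether $\sqrt{\pi_0}$ exceeds $(c+1)/3$ is in fact slightly more careful than the paper's). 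The one place where you genuinely diverge is the pivotal step $p_H>(c+1)/3$, i.e.\ $F_p(p_H)<0$. The paper gets this computationally: it shows the locus $F=0$ never meets the critical line $p=(1+c)/3$ by checking that the quadratic $\pi_0^2-\pi_0 c+\frac{1}{27}(1+c)^3=0$ has discriminant $D(c)=c^2-\frac{4}{27}(1+c)^3<0$ on $[0,1]$, and then pins down the sign of $F_p$ by continuity from the case $c=0$. You instead argue geometrically from the concavification: with $\psi=\text{chord}-U_1\ge0$ on $[0,p_H]$, $\psi(p_H)=\psi'(p_H)=0$, and $\psi'''\equiv 6(\beta_p+\beta_n)>0$ on $(\pi_0,1)$, a Taylor expansion at $p_H$ forces $\psi''(p_H)>0$, hence $\frac{\partial^2}{\partial p^2}U_1(p_H\mid\pi_0)=2(\beta_p+\beta_n)(c+1-3p_H)<0$. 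Both arguments are valid; yours is more conceptual and avoids the discriminant calculation, while the paper's yields the slightly stronger global statement that $F_p$ never vanishes anywhere on the solution set, not just at the optimum. Your flagging of the kink at $\pi_0$ — that the cubic formulas are only used on $(\pi_0,1)$, which suffices because $\pi_0<p_H<1$ — is exactly the right caution.
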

There is a tension between loss aversion near the reference point
(captured by $\alpha_{n}-\alpha_{p})$ and diminishing sensitivity
(captured by $\beta_{n}+\beta_{p})$ in shaping the optimal information
structure. Fixing the prior belief, the optimal amount of partial
good news is increasing in loss aversion but decreasing in diminishing
sensitivity. To understand these comparative statics, recall that
in the bad state the agent will sometimes experience false hope as
he gets interim good news. The agent chooses between getting (i) a
larger piece of false hope with lower probability, or (ii) a smaller
piece of false hope with higher probability. When there is more loss
aversion near the reference point, belief paths that feature a small
piece of good news followed by a small piece of bad news become much
more costly, so (i) is preferred. When there is more diminishing sensitivity,
the utility gap between the positive components of (i) and (ii) narrows,
so (ii) becomes more favorable.

The optimal partial good news is non-monotonic in the prior belief
when $\mu$ exhibits loss aversion \textemdash{} in that case, $p_{H}$
decreases with the prior when the prior is low, but increases with
the prior when it is high. Figure \ref{fig:comparative_statics_quadratic_commitment}
illustrates. The intuition is that the agent faces competing incentives
in maximizing his news utility conditional on the good state and the
bad. Conditional on state $A,$ the optimal interim good news is $\frac{1}{2}(1+\pi_{0})$,
which exploits diminishing sensitivity by splitting the good news
evenly across two periods. Conditional on state $B$, the distortion
from loss aversion discussed before pushes towards information structures
that send a bigger piece of interim good news (with lower probability).
For $\pi_{0}$ near 0, the agent's expected welfare is essentially
determined by his welfare in the bad state, so the latter incentive
dominates and $p_{H}$ is far above 0.5. As $\pi_{0}$ increases,
the relative weight on the good state's welfare increases, so $p_{H}$
converges to $\frac{1}{2}(1+\pi_{0}).$ In the case of $\alpha_{n}=\alpha_{p}$,
the distortion from loss aversion is absent, so we get $\frac{dp_{H}}{d\pi_{0}}>0$
for any $\pi_{0}\in(0,1)$.

\begin{figure}[H]
\vspace{-10bp}

\begin{centering}
\includegraphics[scale=0.35]{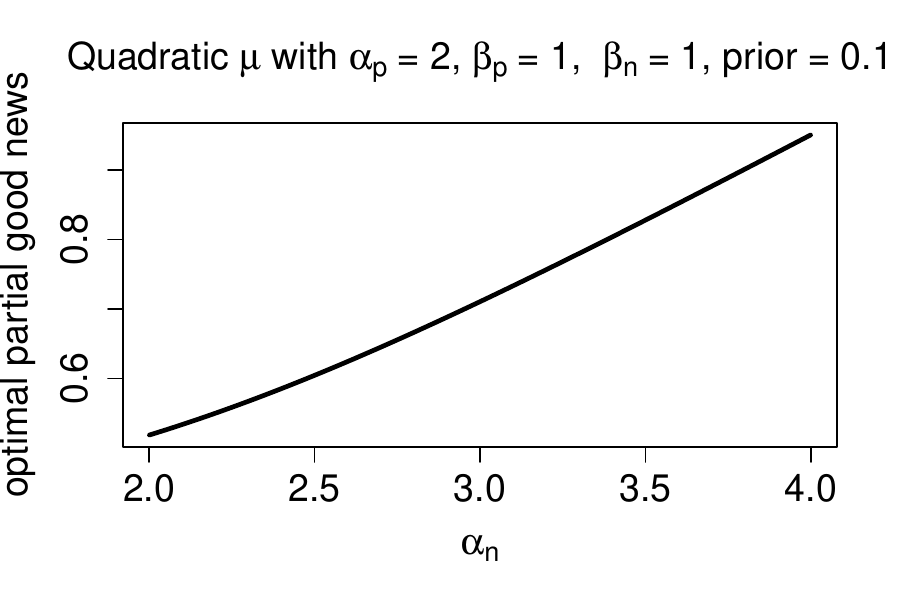} \includegraphics[scale=0.35]{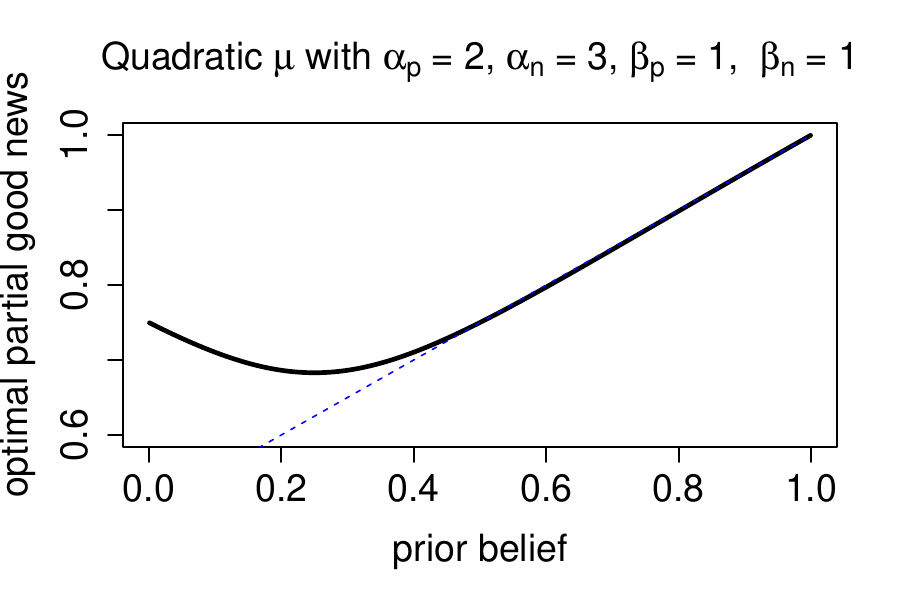}
\par\end{centering}
\vspace{-10bp}

\caption{\label{fig:comparative_statics_quadratic_commitment}\textbf{Left:
}Optimal partial good news with $T=2$, prior $\pi_{0}=0.1,$ quadratic
news utility parameters $\alpha_{p}=2,$ $\beta_{p}=1,$ $\beta_{n}=1$,
as a function of $\alpha_{n}$. The optimal $p_{H}$ monotonically
increases with the amount of loss aversion near 0. \textbf{Right}:
Optimal partial good news with $T=2$, quadratic news utility parameters
$\alpha_{p}=2$, $\alpha_{n}=3,$ $\beta_{p}=1,$ $\beta_{n}=1$,
as a function of the prior belief. The dashed blue line shows $\pi_{0}\protect\mapsto\frac{1}{2}(\pi_{0}+1)$,
the midpoint between the prior and 1. The optimal partial good news
is decreasing in the prior before $\pi_{0}=0.25$, and increasing
afterwards.}
\end{figure}

\subsection{Proofs of Results from Appendix \ref{sec:Optimal-Information-Structure}}

\subsubsection{Proof of Proposition \ref{prop:subopt_one_shot}}
\begin{proof}
Suppose $T=2.$ Consider the following family of information structures,
indexed by $\epsilon>0.$ Let $M=\{m_{L},m_{H}\}.$ Let $\sigma_{1}(A)(m_{H})=1$,
and $\sigma_{1}(B)(m_{L})=x,$ $\sigma_{1}(B)(m_{H})=1-x$ for some
$x\in(0,1)$ so that the posterior belief after observing $m_{H}$
is $(1-\epsilon)$.

For every $\epsilon>0,$ the difference between its expected news
utility and that of one-shot resolution is $W(\epsilon),$ given by
\begin{align*}
 & \pi_{0}\cdot\left[\mu((1-\epsilon)-\pi_{0})+\mu(\epsilon)-\mu(1-\pi_{0})\right]\\
+ & \frac{\epsilon}{1-\epsilon}\pi_{0}\cdot\left[\mu((1-\epsilon)-\pi_{0})+\mu(-(1-\epsilon))-\mu(-\pi_{0})\right].
\end{align*}
$W$ is continuously differentiable away from 0 and $W(0)=0.$ To
show that $W(\epsilon)>0$ for some $\epsilon>0$, it suffices that
$\lim_{\epsilon\to0^{+}}W^{'}(\epsilon)>0.$ Using the continuous
differentiability of $\mu$ except at 0, this limit is 
\[
\mu(1-\pi_{0})-\mu(-\pi_{0})+\mu^{'}(0^{+})-\mu^{'}(1-\pi_{0})+\mu(-1)>0.
\]

If $T>2,$ then note the agent's $T$-period problem starting with
prior $\pi_{0}$ has a value at least as large as the 2-period problem
with the same prior. On the other hand, one-shot resolution brings
the same total expected news utility regardless of $T.$
\end{proof}

\subsubsection{Proof of Corollary \ref{cor:quadratic}}
\begin{proof}
We verify Proposition \ref{prop:subopt_one_shot}'s condition $\mu(1-\pi_{0})-\mu(-\pi_{0})+\mu^{'}(0^{+})-\mu^{'}(1-\pi_{0})+\mu(-1)>0,$
which is equivalent to $\mu^{'}(0^{+})+\mu(1-\pi_{0})-\mu(-\pi_{0})>-\mu(-1)+\mu^{'}(1-\pi_{0}).$
We have that 
\begin{align*}
LHS & =\alpha_{p}+\alpha_{p}(1-\pi_{0})-\beta_{p}(1-\pi_{0})^{2}-[\beta_{n}\pi_{0}^{2}-\alpha_{n}\pi_{0}]\\
RHS & =[-\beta_{n}+\alpha_{n}]+[\alpha_{p}-2\beta_{p}(1-\pi_{0})]
\end{align*}
By algebra, $LHS-RHS=(1-\pi_{0})(\alpha_{p}-\alpha_{n})+(1-\pi_{0}^{2})(\beta_{p}+\beta_{n}).$
Given that $(\alpha_{n}-\alpha_{p})\le(\beta_{p}+\beta_{n})$ and
$1-\pi_{0}^{2}>1-\pi_{0}$ for $0<\pi_{0}<1$, 
\[
LHS-RHS>-(1-\pi_{0}^{2})(\beta_{p}+\beta_{n})+(1-\pi_{0}^{2})(\beta_{p}+\beta_{n})=0.
\]
\end{proof}

\subsubsection{Backwards-Induction with Concavification}

The next few results depend on an iterative concavification procedure.

For $f:\Delta(\Theta)\to\mathbb{R},$ let $\text{cav}f$ be the concavification
of $f$ \textemdash{} that is, the smallest concave function that
dominates $f$ pointwise. Concavification plays a key role in solving
this information design problem, just as in \citet{kg} and \citet{am}.

For $\pi_{T-2},\pi_{T-1}\in[0,1]$ two beliefs about the state, let
$U_{T-1}(\pi_{T-1}\mid\pi_{T-2})$ be the sum of the agent's expected
news utilities in periods $T-1$ and $T$, if he enters period $T-1$
with belief $\pi_{T-2}$ and updates it to $\pi_{T-1}$. More precisely,
\[
U_{T-1}(\pi_{T-1}\mid\pi_{T-2}):=\mu(\pi_{T-1}-\pi_{T-2})+\pi_{T-1}\cdot\mu(1-\pi_{T-1})+(1-\pi_{T-1})\cdot\mu(-\pi_{T-1}),
\]
Note that by the martingale property of beliefs, if the agent holds
belief $\pi_{T-1}$ at the end of period $T-1,$ then state $A$ must
then realize in period $T$ with probability $\pi_{T-1}$.

Let $U_{T-1}^{*}(\pi_{T-2}):=\left(\text{cav}U_{T-1}(\cdot\mid\pi_{T-2})\right)(\pi_{T-2}).$
As we will show in the proof of Proposition \ref{prop:backwards_induction},
$U_{T-1}^{*}(\pi_{T-2})$ is the value function of the agent if he
enters period $T-1$ with belief $\pi_{T-2}.$ Continuing inductively,
using the value function $U_{t+1}^{*}(\cdot)$ for $t\ge1$, we may
define $U_{t}(\pi_{t}\mid\pi_{t-1}):=\mu(\pi_{t}-\pi_{t-1})+U_{t+1}^{*}(\pi_{t}),$
which leads to the period $t$ value function $U_{t}^{*}(x):=\left(\text{cav}U_{t}(\cdot\mid x)\right)(x).$
The maximum expected news utility across all information structures
is $U_{1}^{*}(\pi_{0})$, and the sequence of concavifications give
the optimal information structure.
\begin{prop}
\label{prop:backwards_induction} The maximum expected news utility
across all information structures is $U_{1}^{*}(\pi_{0}).$ There
is an information structure $(M,\sigma)$ with $|M|=2$ attaining
this maximum, with the property that after each on-path public history
$h^{t-1}$ associated with belief $\pi_{t-1},$ the information structure
$\sigma_{t}(\cdot\mid h^{t-1},\theta)$ induces posterior $q^{k}$
at the end of period $t$ with probability $w^{k},$ for some $q^{1},q^{2}\in[0,1],$
$w^{1},w^{2}\ge0,$ $w^{1}+w^{2}=1$, satisfying $\sum_{k=1}^{2}w^{k}q^{k}=\pi_{t-1}$,
and $U_{t}^{*}(\pi_{t-1})=\sum_{k=1}^{2}w^{k}U_{t}(q^{k}\mid\pi_{t-1})$.
\end{prop}
The proof of Proposition \ref{prop:backwards_induction} appears in
Online Appendix \ref{sec:Secondary_proofs}.

\subsubsection{Proof of Proposition \ref{prop:chord}}
\begin{proof}
Suppose the condition in Proposition \ref{prop:chord} holds. So in
particular, it holds for $q=\pi_{0}.$ Consider any information structure
$(M,\sigma)$ and its induced distribution over posterior beliefs
in state $A$, $\eta\in\Delta([0,1]).$ If there exists $0<x<\pi_{0}$
such that $\eta(x)>0,$ then we can ``split posterior $x$ into $0$
and $\pi_{0}$'': that is, we can construct another information structure
$(\tilde{M},\tilde{\sigma})$ with induced distribution $\tilde{\eta}\in\Delta([0,1]),$
so that $\tilde{\eta}(x)=0,$ $\tilde{\eta}(0)=\eta(0)+\eta(x)(1-\frac{x}{\pi_{0}}),$
$\tilde{\eta}(\pi_{0})=\eta(\pi_{0})+\eta(x)\frac{x}{\pi_{0}}$. Information
structure $(\tilde{M},\tilde{\sigma})$ gives strictly higher news
utility than $(M,\sigma)$, since the condition implies $U_{1}(x\mid\pi_{0})<(1-\frac{x}{\pi_{0}})\cdot U_{1}(0\mid\pi_{0})+\frac{x}{\pi_{0}}\cdot U_{1}(\pi_{0}\mid\pi_{0})$.

Conversely, suppose every information structure $(M,\sigma)$ with
induced posterior distribution $\eta$ such that $\eta(x)>0$ for
some $0<x<\pi_{0}$ is strictly suboptimal. Then, there must exist
an optimal information structure, $(\tilde{M},\tilde{\sigma})$ with
posterior distribution $\tilde{\eta},$ so that $\tilde{\eta}$ is
supported on two points: 0 and some $q>\pi_{0}.$ (If no information
at period 1 is optimal, then one-shot resolution at period 1 is also
optimal, which has $\tilde{\eta}(0)>0.$) If there exists some point
$p\in(0,\pi_{0})$ that violates the condition of Proposition \ref{prop:chord}
for this $q$, that is $U_{1}(p\mid\pi_{0})$ is at least as large
as the height of the chord connecting $(0,U_{1}(0\mid\pi_{0}))$ and
$(q,U_{1}(q\mid\pi_{0}))$, then an information structure inducing
the posterior beliefs $p$ and $q$ would strictly dominate the optimal
information structure, which is impossible.
\end{proof}

\subsubsection{Proof of Corollary \ref{cor:chord_quadratic}}

We first state a sufficient condition for the sub-optimality of information
structures with partial bad news with $T=2$. Consider the chord connecting
$(0,U_{1}(0\mid\pi_{0}))$ and $(\pi_{0},U_{1}(\pi_{0}\mid\pi_{0}))$
and let $\ell(x)$ be its height at $x\in[0,\pi_{0}]$. Let $D(x):=\ell(x)-U_{1}(x\mid\pi_{0})$.
\begin{lem}
\label{lem:height_difference_inflection}For this chord to lie strictly
above $U_{1}(p\mid\pi_{0})$ for all $p\in(0,\pi_{0}),$ it suffices
that $D^{'}(0)>0,$ $D^{'}(\pi_{0})<0,$ and $D^{''}(p)=0$ for at
most one $p\in(0,\pi_{0}).$
\end{lem}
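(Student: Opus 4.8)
The plan is to argue by contradiction, producing two distinct interior zeros of $D''$ whenever the chord fails to dominate $U_1(\cdot\mid\pi_0)$ strictly. First I would record the structural facts. Since $\ell$ is the chord through $(0,U_1(0\mid\pi_0))$ and $(\pi_0,U_1(\pi_0\mid\pi_0))$, it is affine, so $D(0)=D(\pi_0)=0$ and $D''(x)=-\tfrac{\partial^2}{\partial x^2}U_1(x\mid\pi_0)$ on $(0,\pi_0)$; in particular the third hypothesis is literally that $D''$ vanishes at most once on $(0,\pi_0)$. Here $D$ is continuous on $[0,\pi_0]$ and twice differentiable on $(0,\pi_0)$ with continuous first derivative there --- on the open interval $(0,\pi_0)$ every increment fed into $\mu$ in the formula for $U_1(\cdot\mid\pi_0)$ stays bounded away from $0$, so the standing smoothness assumptions on $\mu$ apply, and the only kinks of $p\mapsto U_1(p\mid\pi_0)$ on $[0,\pi_0]$ sit at $p=0$ and $p=\pi_0$. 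Accordingly $D'(0)$ and $D'(\pi_0)$ are read as the one-sided derivatives $D'(0^+)$ and $D'(\pi_0^-)$.

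From $D'(0^+)>0$ with $D(0)=0$ I get $D>0$ on a right-neighborhood of $0$, and from $D'(\pi_0^-)<0$ with $D(\pi_0)=0$ I get $D>0$ on a left-neighborhood of $\pi_0$. Now suppose toward a contradiction that $D(p)\le 0$ for some $p\in(0,\pi_0)$. Then $D$ attains its minimum on the compact set $[0,\pi_0]$; since the endpoint values are $0$ and $D$ is $\le 0$ at an interior point, some minimizer $p^*$ lies in $(0,\pi_0)$, and as an interior extremum of the differentiable function $D$ it satisfies $D'(p^*)=0$, with $D(p^*)\le 0$. On $[0,p^*]$ the function $D$ equals $0$ at $0$, is $\le 0$ at $p^*$, and is strictly positive just to the right of $0$, so it attains a strictly positive maximum at some interior point $r_1\in(0,p^*)$, giving $D'(r_1)=0$; Rolle's theorem applied to $D'$ on $[r_1,p^*]$ then yields $s_1\in(r_1,p^*)$ with $D''(s_1)=0$. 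Symmetrically, using $D(\pi_0)=0$ and $D>0$ just to the left of $\pi_0$, there is an interior maximizer $r_2\in(p^*,\pi_0)$ of $D$ on $[p^*,\pi_0]$ with $D'(r_2)=0$, and Rolle on $[p^*,r_2]$ produces $s_2\in(p^*,r_2)$ with $D''(s_2)=0$. Since $s_1<p^*<s_2$ these are two distinct zeros of $D''$ inside $(0,\pi_0)$, contradicting the hypothesis. Hence $D>0$ throughout $(0,\pi_0)$, which is exactly the assertion that the chord lies strictly above $U_1(\cdot\mid\pi_0)$ on $(0,\pi_0)$.

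The step I expect to need the most care is the endpoint bookkeeping: treating $D'(0)$ and $D'(\pi_0)$ correctly as one-sided derivatives, and checking that the two ``hump'' maximizers $r_1$ and $r_2$ are genuinely interior to $[0,p^*]$ and $[p^*,\pi_0]$ --- that is what licenses $D'(r_i)=0$ and hence both Rolle applications. Everything else is mechanical. An equivalent route avoids the contradiction: since $D''$ is continuous on $(0,\pi_0)$ and vanishes at most once, it has constant sign on each of the (at most two) subintervals it cuts out, so $D'$ is monotone on each piece; a short case check on the possible sign patterns, using $D'(0^+)>0>D'(\pi_0^-)$, shows $D'$ is positive then negative on $(0,\pi_0)$, so $D$ increases then decreases, and since $D(0)=D(\pi_0)=0$ it is strictly positive in between. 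I would present whichever of the two reads more cleanly next to the quadratic-news-utility corollary that invokes this lemma.
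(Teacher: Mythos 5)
Your proof is correct, and your primary argument takes a genuinely different route from the paper's. The paper argues directly: it notes that the hypothesis forces $D''$ to change sign at most once on $(0,\pi_0)$, locates any interior minimizer $x_0$ of $D$ (which must satisfy $D''(x_0)\ge 0$), and then splits into cases according to the sign of $D''$ near $0$, using monotonicity of $D'$ on the relevant subinterval together with the endpoint conditions $D'(0)>0$, $D'(\pi_0)<0$ and the fundamental theorem of calculus to show the interior minimum value is strictly positive. You instead argue by contradiction: if $D\le 0$ somewhere in the interior, you manufacture an interior minimizer $p^*$ flanked by two interior maximizers $r_1<p^*<r_2$ (each strictly positive because $D>0$ near both endpoints), and two applications of Rolle's theorem to $D'$ produce distinct zeros of $D''$ on either side of $p^*$, contradicting the hypothesis. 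Your version is arguably cleaner in that it uses the hypothesis on $D''$ exactly as stated (at most one zero) rather than passing to the slightly weaker consequence about sign changes, and it avoids the case split; the paper's version is more constructive about where $D$ is increasing or decreasing. Your endpoint bookkeeping (one-sided derivatives at $0$ and $\pi_0$, interiority of $r_1,r_2$ and of $p^*$) is handled correctly, and the alternative route you sketch at the end is essentially the paper's argument.
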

The proof of Lemma \ref{lem:height_difference_inflection} appears
in Online Appendix \ref{sec:Secondary_proofs}. Now we verify that
the condition in Lemma \ref{lem:height_difference_inflection} holds
for the quadratic news utility, which in turn verifies the condition
of Proposition \ref{prop:chord} for $q=\pi_{0}$ and shows partial
bad news information structures to be strictly suboptimal.
\begin{proof}
Clearly, $D(p)$ is a third-order polynomial, so $D^{''}(p)$ has
at most one root.

For $p<\pi_{0},$ we have the derivative 
\begin{align*}
\frac{d}{dp}U(p\mid\pi_{0})= & 2\beta_{n}(p-\pi_{0})+\alpha_{n}+\alpha_{p}(1-p)-\beta_{p}(1-p)^{2}\\
 & +p(-\alpha_{p}+2\beta_{p}(1-p))-(\beta_{n}p^{2}-\alpha_{n}p)+(1-p)(2\beta_{n}p-\alpha_{n})
\end{align*}
The slope of the chord between $0$ and $\pi_{0}$ is: $\alpha_{p}-\beta_{p}+(2\beta_{p}-\alpha_{p}+\alpha_{n})\pi_{0}-(\beta_{p}+\beta_{n})\pi_{0}^{2}$.
So, after straightforward algebra, $D^{'}(0)=(2(\beta_{p}+\beta_{n})-(\alpha_{p}-\alpha_{n}))\pi_{0}-(\beta_{p}+\beta_{n})\pi_{0}^{2}.$
Applying weak loss aversion with $z=1$, $\alpha_{p}-\alpha_{n}\le\beta_{p}-\beta_{n}.$
This shows 
\begin{align*}
D^{'}(0) & \ge(2(\beta_{p}+\beta_{n})-(\beta_{p}-\beta_{n}))\pi_{0}-(\beta_{p}+\beta_{n})\pi_{0}^{2}\\
 & =(\beta_{p}+\beta_{n})\pi_{0}(1-\pi_{0})+2\beta_{n}\pi_{0}>0
\end{align*}
for $0<\pi_{0}<1.$

We also derive $D^{'}(\pi_{0})=(\alpha_{p}-2\beta_{p}-2\beta_{n}-\alpha_{n})\pi_{0}+(2\beta_{p}+2\beta_{n})\pi_{0}^{2}.$
Note that this is a convex parabola in $\pi_{0}$, with a root at
0. Also, the parabola evaluated at 1 is equal to $\alpha_{p}-\alpha_{n}\le0,$
where the inequality comes from the weak loss aversion with $z=0$.
This implies $D^{'}(\pi_{0})<0$ for $0<\pi_{0}<1.$
\end{proof}

\subsubsection{Proof of Proposition \ref{prop:quadratic_explicit_concav}}
\begin{proof}
We have 
\[
\frac{d}{dp}U(p\mid\pi_{0})=2\alpha_{p}-\alpha_{n}-\beta_{p}+2\beta_{p}\pi_{0}+p(-2\alpha_{p}+2\beta_{p}+2\alpha_{n}+2\beta_{n})+p^{2}(-3\beta_{p}-3\beta_{n})
\]
Further, $p$ times slope of chord is: 
\begin{align*}
U(p\mid\pi_{0})-U(0\mid\pi_{0})= & U(p\mid\pi_{0})-(\beta_{n}\pi_{0}^{2}-\alpha_{n}\pi_{0})\\
= & \pi_{0}(-\alpha_{p}+\alpha_{n})+\pi_{0}^{2}(-\beta_{p}-\beta_{n})+p(2\alpha_{p}-\alpha_{n}-\beta_{p})\\
 & +p^{2}(-\alpha_{p}+\beta_{p}+\alpha_{n}+\beta_{n})+p^{3}(-\beta_{p}-\beta_{n})+p\pi_{0}(2\beta_{p})
\end{align*}
Equating $p\cdot\frac{d}{dp}U(p\mid\pi_{0})=U(p\mid\pi_{0})-U(0\mid\pi_{0})$,
we get 
\[
\pi_{0}(\alpha_{n}-\alpha_{p})-(\beta_{p}+\beta_{n})\pi_{0}^{2}=p^{2}(\alpha_{n}-\alpha_{p}+\beta_{n}+\beta_{p})-p^{3}(2\beta_{p}+2\beta_{n}).
\]
Define $c=\frac{\alpha_{n}-\alpha_{p}}{\beta_{n}+\beta_{p}}$. Note
that $c\in[0,1)$ by assumptions in the statement of the Proposition.

Corollary \ref{cor:chord_quadratic} allows us to define $p(\pi_{0},c)$
as an implicit function through $\pi_{0}c-\pi_{0}^{2}=p^{2}(1+c)-2p^{3}.$

We characterize first the derivative of $p$ w.r.t. $\pi_{0}$.

We check the conditions of the implicit function theorem in our setting:
define the function $f(\pi_{0},p,c)=p^{2}(1+c)-2p^{3}-\pi_{0}c+\pi_{0}^{2}$
with domain $(0,1)^{3}$. We look at the case $c=0$ separately in
the end. We need $\partial_{p}f(\pi_{0},p,c)\neq0$. If this is true,
then we can solve for $p(\pi_{0},c)$ locally and also calculate its
derivative. We note that $\partial_{p}f(\pi_{0},p,c)=2p(1+c)-6p^{2}$.
Hence, $\partial_{p}f(\pi_{0},p,c)$ is zero if $p=\frac{1+c}{3}=:\hat{p}\in[\frac{1}{3},\frac{2}{3}]$.
Now, for a fixed $c$, we see if there is a $\pi_{0}$ that would
give $\hat{p}$. This involves solving for $\pi_{0}$ in quadratic
equation 
\begin{equation}
\pi_{0}^{2}-\pi_{0}c+\frac{1}{27}(1+c)^{3}=0.\label{eq:q}
\end{equation}
The discriminant as a function of $c$ is given as $D(c)=c^{2}-\frac{4}{27}(1+c)^{3}$.
Note that $D'(c)=\frac{2}{9}(2-c)(2c-1)$. In particular, $D$ is
decreasing from $c=0$ to $c=\frac{1}{2}$ and increasing from then
on until $c=1$. We note also that $D(0)<0,D(1)<0$ so that overall
it follows that $D(c)<0$ for all $c\in[0,1]$. In particular, it
holds that Equation (\ref{eq:q}) has no solution. This means that
$\partial_{p}f$ never changes sign in $(0,1)^{3}\cap\{(p,\pi_{0},c):\pi_{0}c-\pi_{0}^{2}=p^{2}(1+c)-2p^{3}\}$.
This also implies that $p(\pi_{0})>\frac{1+c}{3},$ for all $c,\pi_{0}\in(0,1)$.
Recall here that $f$ is a smooth function on its domain. Thus, implicit
function theorem is applicable for all $(\pi_{0},c)\in(0,1)^{2}$.

Totally differentiating, we get: 
\begin{align*}
d\pi_{0}\cdot(\alpha_{n}-\alpha_{p})-(\beta_{p}+\beta_{n})2\pi_{0}\cdot d\pi_{0} & =2p\cdot dp\cdot(\alpha_{n}-\alpha_{p}+\beta_{n}+\beta_{p})-3p^{2}\cdot dp\cdot(2\beta_{p}+2\beta_{n}),
\end{align*}
which can be rearranged to $\frac{dp}{d\pi_{0}}\frac{1}{p}=\frac{c-2\pi_{0}}{2p^{2}(1+c)-6p^{3}}.$
The steps above showed that the denominator of this expression never
changes sign. Given that we know it is negative at $c=0$ and $f$
is continuously differentiable, we conclude that the denominator is
always negative for all $c$ and all $\pi_{0}\in(0,1)$. It follows
that unless $c=0$, $p(\pi_{0})$ is falling until the prior characterized
in the statement of the Proposition and increasing afterwards.

For the case $c=0$ one looks at the implicitly defined function $p(\pi_{0})$
through $2p^{3}(\pi_{0})-p^{2}(\pi_{0})=\pi_{0}^{2}$. Solving with
similar steps as above one finds, it is strictly increasing and it
is strictly above $\frac{1}{3}$ for all $\pi_{0}\in(0,1)$.

Summarizing, the amount of optimal good news can always be solved
explicitly under the condition that $c\in[0,1)$ and it is always
strictly above $\max\{\pi_{0},\frac{1}{3}\}$. It is strictly increasing
in the prior in the absence of loss aversion and otherwise U-shaped.

Next, we focus on the derivative of $p$ w.r.t. $c$ as defined implicitly
through $\pi_{0}c-\pi_{0}^{2}=p^{2}(1+c)-2p^{3}.$ We fix $\pi_{0}\in(0,1)$
and we look at the function $f:(0,1)\times(0,1)\rightarrow\mathbb{R}$
given by $f(c,p)=p^{2}(1+c)-2p^{3}-(\pi_{0}c-\pi_{0}^{2})$. The calculations
above show that $f_{p}\neq0$ for every $(c,p)\in(0,1)^{2}$. Hence,
the implicit function theorem is applicable and we can calculate 
\[
p\frac{dp}{dc}=\frac{\pi_{0}-p^{2}}{2(1+c)-6p}.
\]
Above we showed that the denominator of this expression is strictly
negative. Next we show that $p(c)>\sqrt{\pi_{0}}$ for all $c\in(0,1)$.
Note that the function $p\mapsto p^{2}(1+c)-2p^{3}$ is strictly decreasing
in $p$ for $p>\frac{1+c}{3}$. This can be established by looking
at first order derivatives. We note also that $\pi_{0}(1+c)-2\sqrt{\pi_{0}}\pi_{0}>\pi_{0}c-\pi_{0}^{2}$
is equivalent to $1-2\sqrt{\pi_{0}}+\pi=(1-\sqrt{\pi_{0}})^{2}>0$,
which is true by virtue of $\pi_{0}<1$. This, and the strict monotonicity
of $p\mapsto p^{2}(1+c)-2p^{3}$ for $p>\frac{1+c}{3}$ implies that
$p(c)>\sqrt{\pi_{0}}$ for all $c,\pi_{0}\in(0,1)$. This establishes
the result.

Summarizing, the amount of optimal good news is increasing in loss
aversion as measured by $\alpha_{n}-\alpha_{p}$ and decreasing in
the amount of diminishing sensitivity, as measured by $\beta_{n}+\beta_{p}$.
\end{proof}
\newpage{}
\begin{center}
\textbf{\Large{}Online Appendix}{\Large\par}
\par\end{center}

\renewcommand{\thesection}{OA \arabic{section}} 
\setcounter{section}{0}
\renewcommand{\thecor}{OA.\arabic{cor}}
\renewcommand{\theprop}{OA.\arabic{prop}}
\renewcommand{\thelem}{OA.\arabic{lem}}
\setcounter{cor}{0}
\setcounter{lem}{0}
\setcounter{prop}{0}
\setcounter{figure}{0}
\setcounter{table}{0}

\section{\label{sec:Secondary_proofs}Proof of Auxiliary Results Stated in
the Appendix}

\subsection{Proof of Lemma \ref{lem:no_indiff}}
\begin{proof}
Due to sub-additivity,
\begin{equation}
\mu(p)<\mu(p-\pi)+\mu(\pi).\label{eq:helplem2}
\end{equation}
Note that symmetry implies $\mu(-p)=-\mu(p)$ and that $\mu(-\pi)=-\mu(\pi)$.
Rearranged (\ref{eq:helplem2}) is precisely $N(0;\pi)<N(p;\pi)$.
\end{proof}

\subsection{Proof of Lemma \ref{lem:lemm2}}
\begin{proof}
We have $\frac{\partial N_{A}(p;\pi)}{\partial p}=\mu^{'}(p-\pi)-\mu^{'}(1-p)$.
For $0\le p<\pi$ and under greater sensitivity to losses, $\mu^{'}(p-\pi)\ge\mu^{'}(\pi-p).$
Since $\mu^{''}(x)<0$ for $x>0,$ $\mu^{'}(\pi-p)>\mu^{'}(1-p)$.
This shows $\frac{\partial N_{A}(p;\pi)}{\partial p}>0$ for $p\in[0,\pi).$

The symmetry results follow from simple algebra and do not require
any assumptions.

Note that $\frac{\partial^{2}N_{A}(p;\pi)}{\partial p^{2}}=\mu^{''}(p-\pi)+\mu^{''}(1-p)<0$
for any $p\in[\pi,1],$ due to diminishing sensitivity. Combined with
the required symmetry, this means $\frac{\partial N_{A}(p;\pi)}{\partial p}$
crosses 0 at most once on $[\pi,1],$ so for each $p_{1}\in[\pi,1]$,
we can find at most one $p_{2}$ so that $N_{A}(p_{1};\pi)=N_{A}(p_{2};\pi)$.
In particular, this implies at every intermediate $p_{1}\in(\pi,1),$
we get $N_{A}(p_{1};\pi)>N_{A}(\pi;\pi)$ since we already have $N_{A}(1;\pi)=N_{A}(\pi;\pi).$
This shows $N_{A}(\cdot;\pi)$ is strictly larger on $[\pi,1]$ than
on $[0,\pi).$

A similar argument, using $\mu^{''}(x)>0$ for $x<0$, establishes
that for each $p_{1}\in[0,\pi]$, we can find at most one $p_{2}$
so that $N_{B}(p_{1};\pi)=N_{B}(p_{2};\pi)$.
\end{proof}

\subsection{Proof of Lemma \ref{lem:enumeration}}
\begin{proof}
Suppose $|P_{A}|=1.$

If $P_{A}=\{\pi\}$, then any equilibrium message not inducing $\pi$
must induce 0. By the Bayes' rule, the sender cannot induce belief
0 with positive probability in the bad state, so $P_{B}=\{\pi\}$
as well.

If $P_{A}=\{1\},$ then any equilibrium message not inducing $1$
must induce 0. Furthermore, the sender cannot send equilibrium messages
inducing belief 1 with positive probability in the bad state, else
the equilibrium belief associated with these messages should be strictly
less than 1. Thus $P_{B}=\{0\}$.

If $P_{A}=\{p_{1}\}$ for some $0\le p_{1}<\pi,$ then any equilibrium
message not inducing $p_{1}$ must induce 0. This is a contradiction
since the posterior beliefs do not average out to $\pi.$

This leaves the case of $P_{A}=\{p_{1}\}$ for some $\pi<p_{1}<1.$
Any equilibrium message not inducing $p_{1}$ must induce 0. Furthermore,
the sender must induce the belief $p_{1}$ in the bad state with positive
probability, else we would have $p_{1}=1.$ At the same time, the
sender must also induce belief 0 with positive probability in the
bad state, else we violate Bayes' rule. So $P_{B}=\{0,p_{1}\}$.

Now suppose $|P_{A}|=2.$

In the good state, the sender must be indifferent between two beliefs
$p_{1},p_{2}$ both induced with positive probability. By Lemma \ref{lem:lemm2},
$N_{A}(p;\pi)$ is strictly increasing on $[0,\pi]$ and strictly
higher on $[\pi,1]$ than on $[0,\pi)$, while for each $p_{1}\in[\pi,1]$,
there exists exactly one point $p_{2}\in[\pi,1]$ so that $N_{A}(p_{1};\pi)=N_{A}(p_{2};\pi).$
This means we must have $p_{1}\in[\pi,\frac{1+\pi}{2}]$, $p_{2}=1-p_{1}+\pi$.

If $P_{A}=\{\pi,1\}$, any equilibrium message not inducing $\pi$
or 1 must induce 0. Also, $1\notin P_{B},$ because any message sent
with positive probability in the bad state cannot induce belief 1.
We cannot have $P_{B}=\{0\}$, because then the message inducing belief
$\pi$ actually induces 1. We cannot have $P_{B}=\{\pi\}$ for then
we violate Bayes' rule. This leaves only $P_{B}=\{0,\pi\}$.

If $P_{A}=\{p_{1},p_{2}\}$ for some $p_{1}\in(\pi,\frac{1+\pi}{2}),$
then any equilibrium message not inducing $p_{1}$ or $p_{2}$ must
induce 0. Also, $p_{1},p_{2}\in P_{B},$ else messages inducing these
beliefs give conclusive evidence of the good state. By Bayes' rule,
we must have $P_{B}=\{0,p_{1},p_{2}\}.$

It is impossible that $|P_{A}|\ge3,$ since, by Lemma \ref{lem:lemm2},
$N_{A}(p;\pi)$ is strictly increasing on $[0,\pi]$ and strictly
higher on $[\pi,1]$ than on $[0,\pi)$, while for each $p_{1}\in[\pi,1]$,
there exists exactly one point $p_{2}\in[\pi,1]$ so that $N_{A}(p_{1};\pi)=N_{A}(p_{2};\pi).$
So the sender cannot be indifferent between 3 or more different posterior
beliefs of the receiver in the good state.
\end{proof}

\subsection{Proof of Proposition \ref{prop:backwards_induction}}
\begin{proof}
We first justify by backwards induction that the value function is
indeed given by $U_{t}^{*}(x)=\left(\text{cav}U_{t}(\cdot\mid x)\right)(x),$
for all $x\in[0,1]$ and all $t\le T-1$, and that it is continuous
in $x$.

If the agent enters period $t=T-1$ with the belief $x\in[0,1]$,
the optimization problem over information structures is the following.

\[
[Q_{T-1}]\quad\max_{\eta\in\Delta([0,1]),\mathbb{E}[\eta]=x}\int_{[0,1]}U_{T-1}(p\mid x)d\eta(p).
\]
This is because any information structure $\sigma_{T-1}$ induces
a Bayes plausible distribution of posterior beliefs, $\eta$ with
$\mathbb{E}[\eta]=x$, and conversely every such distribution can
be generated by some information structure, as in \citet{kg}. It
is well-known that the value of problem $Q_{T-1}$ is $\left(\text{cav}U_{T-1}(\cdot\mid x)\right)(x)$,
justifying $U_{T-1}^{*}(x)$ as the value function for any $x\in\Delta(\Theta)$.
The objective in $Q_{T-1}$ is continuous in $p$ (by assumption on
$\mu$) and hence in $\eta$, and furthermore the constraint set $\{\eta\in\Delta([0,1]):\mathbb{E}[\eta]=x\}$
is continuous in $x$. Therefore, $x\mapsto U_{T-1}^{*}(x)$ is continuous
by Berge's Maximum Theorem.

Assume that we have shown that value function is continuous and given
by $U_{t}^{*}(x)$ for all $t\ge S$. If the agent enters period $t=S-1$
with belief $x,$ then the value of the maximization problem must
be:

\[
[Q_{t}]\quad\max_{\eta\in\Delta([0,1]),\mathbb{E}[\eta]=x}\int_{[0,1]}\mu(p-x)+U_{t+1}^{*}(p)d\eta(p)
\]
 using the inductive hypothesis that $U_{t+1}^{*}(p)$ is the period
$t+1$ value function. But $\mu(p-x)+U_{t+1}^{*}(p)=U_{t}(p\mid x)$
by definition, and it is continuous by the inductive hypothesis. So
by the same arguments as in the base case, $U_{S-1}^{*}(x)$ is the
time-$(S-1)$ value function and it is continuous, completing the
inductive step.

In the first period, by Carath\'{e}odory's theorem, there exist weights
$w^{1},w^{2}\ge0$, beliefs $q^{1},q^{2}\in[0,1],$ with $\sum_{k=1}^{2}w^{k}=1$,
$\sum_{k=1}^{2}w^{k}q^{k}=x$, such that $U_{1}^{*}(\pi_{0})=\sum_{k=1}^{2}w^{k}U_{1}(q^{k}\mid\pi_{0})$.
Having now shown $U_{2}^{*}$ is the period-2 value function, there
must exist an optimal information structure where $\sigma_{1}(\cdot\mid\theta)$
induces beliefs $q^{k}$ with probability $w^{k}$. This information
structure induces one of the beliefs $q^{1},q^{2}$ in the second
period. Repeating the same procedure for subsequent periods establishes
the proposition.
\end{proof}

\subsection{Proof of Lemma \ref{lem:height_difference_inflection}}
\begin{proof}
We need $D>0$ in the region $(0,\pi_{0})$. We know that $D(0)=D(\pi_{0})=0$.
Given the conditions in the statement and the twice-differentiability
of $D$ in $(0,\pi_{0})$ it follows that $D''$ changes sign only
once. Moreover, it also follows that $D>0$ in a right-neighborhood
of $x=0$ and a left-neighborhood of $x=\pi_{0}$. Suppose $D$ has
an interior minimum at $x_{0}\in(0,\pi_{0})$. Then it holds $D''(x_{0})\ge0$.

Suppose $D''(x)>0$ for all small $x$. Then it follows $x_{0}\le p$,
where we set $p=\pi_{0}$ if $p$ doesn't exist. Because $D''(x)\ge0$
for all $x\le p$ we have that $D'(x)>0$ for all $x\le p$. In particular
also $D(x)>0$ for all such $x$ due to the Fundamental Theorem of
Calculus. Thus, the interior minimum is positive and so the claim
about $D$ in $(0,\pi)$ is proven in this case.

Suppose instead that $D''(x)<0$ for all $x$ near enough to $0$.
Then it follows that $x_{0}\ge p$. In particular, for all $x>p$
we have $D''(x)>0$. Since the derivative is strictly increasing for
all $x\in(x_{0},\pi_{0})$ and $D'(\pi_{0})<0$ we have that $D'(x)<0$
for all $x\in(x_{0},\pi_{0})$. In particular, from the Fundamental
Theorem of Calculus, $D(\pi_{0})$ is strictly below $D(x_{0})$.
Since $D(\pi_{0})=0$ we have again that $D(x_{0})>0$.

Given the boundary values of $D$ and the signs of the derivatives
at $0,\pi_{0}$ and that any interior minimum of $D$ is strictly
positive, we have covered all cases and so shown that $D>0$ in $(0,\pi_{0})$.
\end{proof}

\section{Further Results}

\subsection{\label{subsec:Preference-for-Dominated}Preference for Dominated
Consumption Lotteries}

So far, we have taken the prior distribution over states $\pi_{0}\in\Delta(\Theta)$
as exogenously given. Fixing an information structure, a news-utility
agent may strictly prefer a dominated distribution over states. This
distinguishes our news-utility preference from other preferences,
such as recursive preferences and \citet*{gul2019random}'s risk consumption
preference.

We now give an example. Suppose $T=2$ and there are two states, $\Theta=\{A,B\}.$
Normalize consumption utility to be $v(c_{A})=1,$ $v(c_{B})=0.$
Let the news utility function be $\mu(z)=\sqrt{z}$ for $z\ge0,$
$\mu(z)=-\lambda\sqrt{-z}$ for $z<0,$ where $\lambda\ge1.$ At time
$t=0,$ the agent holds a prior belief $\pi_{0}$ with $\pi_{0}(A)=p\in[0,1].$
At time $t=1,$ the agent learns the state perfectly, so $\pi_{1}$
is degenerate with probability 1. Consumption takes place at time
$t=2.$ For any $\lambda,$ the agent strictly prefers state $A$
for sure ($\pi_{0}(A)=1$) over state $B$ for sure ($\pi_{0}(A)=0$),
as both environments provide zero news utility. But, the agent may
strictly prefer state $B$ for sure over an interior probability of
the good state, $\pi_{0}(A)=p.$ In fact, this happens when $p+p\sqrt{1-p}-\lambda(1-p)\sqrt{p}<0$,
which says $\lambda>\frac{\sqrt{p}(1+\sqrt{1-p})}{1-p}$. A sufficiently
loss-averse agent may strictly prefer no chance of winning a consumption
lottery than a low chance of winning.

\subsection{\label{subsec:Optimal-Information-Anticipatory}Optimal Information
Structure for Anticipatory Utility}

We show that if the agent has anticipatory utility and gets $A\left(\sum\pi_{t}(\theta)\cdot v(c_{\theta})\right)$
when he ends period $t$ with posterior belief $\pi_{t}\in\Delta(\Theta),$
then with commitment power, there exists an optimal information structure
that only discloses information in period $t=1$.

Consider any information structure $(M,\sigma).$ Find the period
$t^{*}$ with the highest ex-ante anticipatory utility, i.e., $t^{*}\in\underset{1\le t\le T-1}{\arg\max}\mathbb{E}_{(M,\sigma)}\left[A\left(\sum\pi_{t}(\theta)\cdot v(c_{\theta})\right)\right]$.
Consider another information structure that generates the (feasible)
distribution of beliefs $\pi_{t^{*}}$ in period 1, then reveals no
additional information in periods $2,...,T-1.$ This new information
structure gives weakly higher expected anticipatory utility than $(M,\sigma)$
in every period. Therefore there exists an optimal information structure
that only discloses information in $t=1.$

\subsection{\label{subsec:Risk-Consumption-Preferences}Risk Consumption Preferences}

\citet*{gul2019random} study a model of preference over random evolving
lotteries and propose a class of risk consumption preferences. Translated
into our setting, an agent with risk consumption preference values
an information structure $(M,\sigma)$ according to utility function
\[
\mathbb{E}_{(M,\sigma)}\left[\int v(u_{2}(\pi_{t}))d\eta\right].
\]
 Here $u_{2}:\Delta(\Theta)\to\mathbb{R}$ is affine and $v$ is strictly
increasing. The term $v(u_{2}(\pi_{t}))$ is viewed as a function
from the time periods $\{0,1,...,T-1\}$ into the reals and $d\eta$
denotes the Choquet integral with respect to a capacity $\eta$ on
$\{0,1,...,T-1\}.$

To show that our model of mean-based news utility is not nested under
the class of risk consumption preferences, we show that risk consumption
preferences cannot exhibit the preference patterns from Appendix \ref{subsec:Preference-for-Dominated}:
that is, strictly preferring winning a lottery for sure to not winning
it for sure, but also strictly preferring not winning for sure to
winning with some interior probability $p\in(0,1)$ in the $T=2$
setup.

By an abuse of notation, the belief assigning probability $q$ to
state $A$ will simply be denoted $q.$ The first part of the preference
gives $v(u_{2}(1))>v(u_{2}(0))$, since Choquet integral of a constant
function returns the same constant. When the prior winning probability
is $p\in(0,1),$ the Choquet integrand is either $f_{A}:\{0,1\}\to\mathbb{R}$
with $f_{A}(0)=v(u_{2}(p))$ and $f_{A}(1)=v(u_{2}(1))$, or $f_{B}:\{0,1\}\to\mathbb{R}$
with $f_{B}(0)=v(u_{2}(p))$ and $f_{B}(0)=v(u_{2}(0))$. The two
integrands correspond to belief paths where the agent wins or loses
the lottery. Since $v$ is strictly increasing, $u_{2}$ is affine,
and $v(u_{2}(1))>v(u_{2}(0)),$ we have $v(u_{2}(p))>v(u_{2}(0)).$
Thus both $f_{G}$ and $f_{B}$ dominate the constant function $v(u_{2}(0))$
in every period. By monotonicity of the Choquet integral, the agent
must prefer $p$ probability of winning the lottery to no chance of
winning it.
\end{document}